\newtheorem{theorem}{\textbf{Theorem}}
\newtheorem{corollary}{\textbf{Corollary}}
\newtheorem{definition}{\textbf{Definition}}
\newtheorem{lemma}{\textbf{Lemma}}
\DeclareMathOperator*{\argmin}{arg\,min}
\newcommand{\placetextbox}[3]{
  \setbox0=\hbox{#3}
  \AddToShipoutPictureFG*{
    \put(\LenToUnit{#1\paperwidth},\LenToUnit{#2\paperheight}){\vtop{{\null}\makebox[0pt][c]{#3}}}%
  }%
}%
\begin{document}

\placetextbox{0.5}{0.97}{\texttt{This work has been submitted to the IEEE for possible publication. Copyright may be }}%
\placetextbox{0.5}{0.954}{\texttt{transferred without notice, after which this version may no longer be accessible.}}%

\title{Relay Incentive Mechanisms Using Wireless Power Transfer in Non-Cooperative Networks}

\author{Winston~Hurst,~\IEEEmembership{Student Member,~IEEE,}
        and~Yasamin~Mostofi,~\IEEEmembership{Fellow,~IEEE}
\thanks{Winston Hurst and Yasamin Mostofi are with the Department of Electrical and Computer Engineering, University of California, Santa Barbara, USA (email: \{winstonhurst, ymostofi\}@ece.ucsb.edu). This work was supported in part by ONR award N00014-23-1-2715.}
}

\maketitle

\begin{abstract}
The advances of 6G systems have prompted the study of new communication paradigms, including relay networks enabled by wireless power transfer (WPT). While existing literature focuses on the cooperative case, this paper considers the scenario where a source must offer payment in the form of WPT to induce one of several non-cooperative, battery-powered user equipments (UEs) to act as a relay. In this non-cooperative setting, the utility-maximizing UEs will try to extract as much energy from the source as possible. We propose a protocol based on a reverse auction that enables the source to determine which candidate UE to select as the relay and the amount of energy to be transferred as payment, even when the channel quality between the candidates and the destination is unknown to the source. We first examine the performance of the system under a second-price, sealed bid auction, i.e., a Vickrey auction. We prove that our protocol achieves the best possible outage probability and point out ways to mitigate the gap in energy efficiency when compared to a cooperative baseline. We then examine system performance when a Myerson auction is used, which is known to maximize an auctioneer's utility. To ensure tractability of the Myerson pricing mechanism, a regularity condition must be satisfied, and we extend the established forward auction regularity result to the reverse auction setting by providing a mathematical characterization of the regularity condition and the related pricing mechanism for reverse Myerson auctions. We then provide detailed proofs of the regularity of the WPT-based auction for both lognormal fading and Rayleigh fading. To validate and illustrate the analytical findings, we include the results of extensive numerical experiments, which elucidate the impact of system parameters (e.g., number of candidates, environment geometry) on energy efficiency and outage probability. These studies highlight conditions under which one type of auction is preferred over the other and give guidance on system design. Overall, the auction-based system offers a foundation for improved energy efficiency and communication reliability in non-cooperative environments.
\end{abstract}


\IEEEpeerreviewmaketitle

\vspace{-0.15in}
\section{Introduction}\label{sec:introduction}

\IEEEPARstart{A}{s} 6G standards come into clearer focus, a number of key technologies, paradigms, and challenges have received increasing attention \cite{IMT-2030}. Higher communication frequencies, including mmWave and THz communications, will enable massive data rates, but they will also require new network topographies to account for propagation phenomena, such as high penetration loss and sparse multipath effects, not commonly present at lower frequencies \cite{Hemadeh2018CST}. As the number of connected devices continues to grow rapidly, there is increased demand for distributed architectures and frameworks, such as federated learning and edge computing, that avoid the overhead of fully centralized systems \cite{Nguyen2022IoTJ, Wang2023CST}. Wireless power transfer (WPT) has also received attention as a potential enabler for networks of unpowered sensors, among other applications \cite{Ponnimbaduge2018CST, Ozyurt2022Access}. Finally, running as a common thread through all these trends is the need to keep systems energy efficient, particularly in the face of climate change \cite{Kamran2024COMSNETS}.

In this paper, we examine a scenario at the confluence of these factors. Specifically, we consider a source node which needs to offload data to a remote access point (AP) with which the source does not have a direct line-of-sight (LOS) channel, due to some blockage. For high-frequency channels, this blockage creates significant penetration loss, making direct communication with the AP costly in terms of energy consumption or potentially infeasible due to limits on transmit power. The source may try to enlist another nearby user equipment (UE) to act as an intermediate relay. However, to convince the non-cooperative entity to provide the relaying service, the source can use wireless power transfer (WPT) as a form of payment. Fig.~\ref{fig:intro_setup} illustrates the problem scenario.

Our objective is to design a protocol which ensures that the source can communicate with the AP in an energy-efficient manner while accounting for the individual rationality of the participating entities. We approach the problem from a game-theoretic perspective and propose an auction-based protocol which achieves the desired system behavior. We next present a brief literature review before stating our contributions.

\begin{figure}
    \centering
    \includegraphics[width=0.95\linewidth]{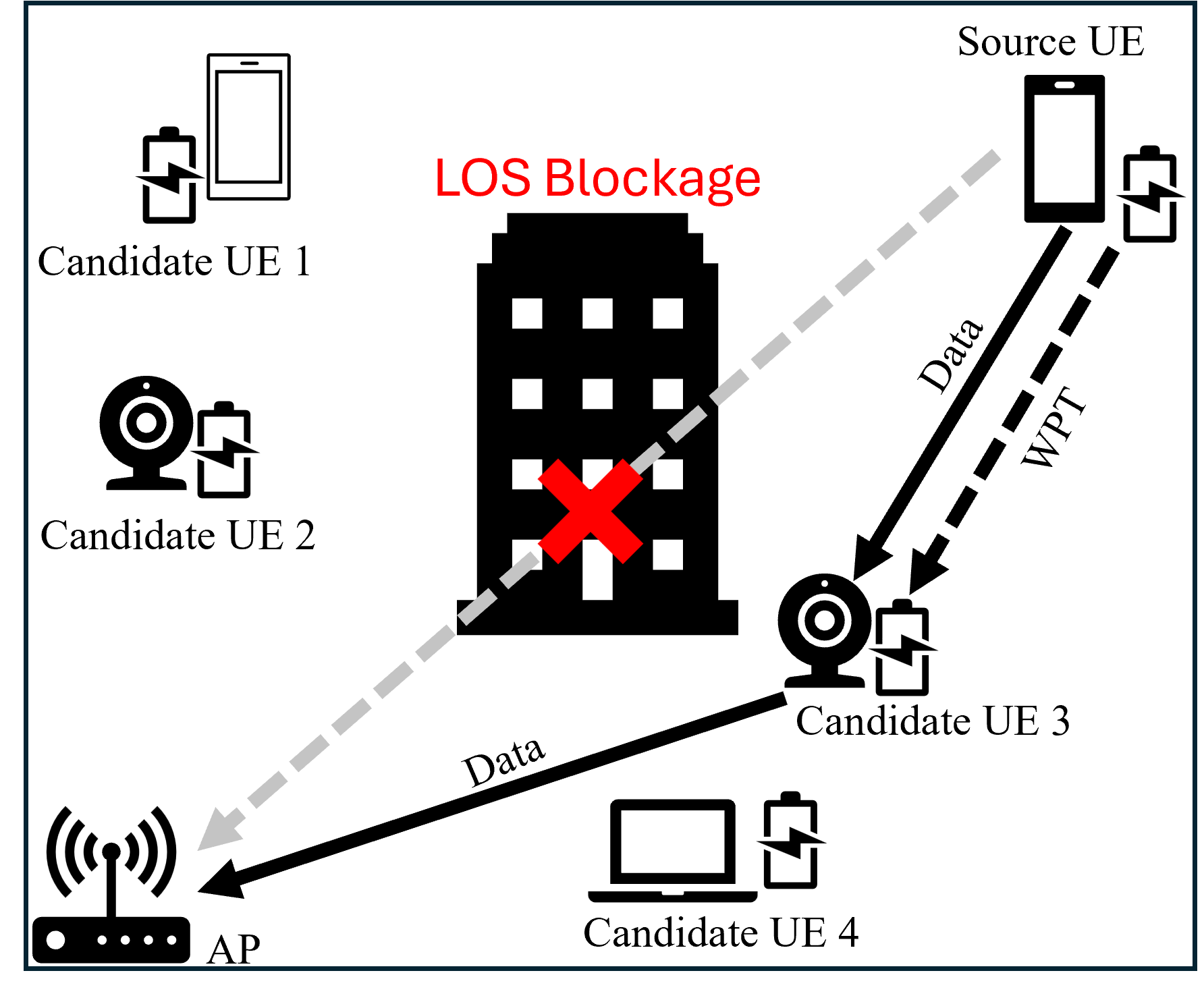}
    \vspace{-0.1in}
    \caption{Overview of problem scenario. A source UE must offload data to the AP, but the LOS path is blocked, resulting in poor link quality. The source must induce a candidate UE to cooperate with a payment in the form of WPT.}
    \label{fig:intro_setup}
    \vspace{-0.2in}
\end{figure}

\vspace{-0.15in}
\subsection{Literature Review}
WPT involves the transfer of energy from one device to another without a wired connection, and while several technologies fall under the umbrella of WPT, we specifically consider RF-based WPT. In recent years, it has found many applications, from smart home ecosystems to industrial manufacturing \cite{mulders2022Sensors}. For example, RF WPT can be used in Internet of Things (IoT) networks to charge batteries on small, remote devices, possibly with the aid of unmanned vehicles \cite{Xie2021TGCN}. Experimental results demonstrate received powers in the microwatt or milliwatt ranges~\cite{CANSIZ2019292}, and although comparatively inefficient compared to, e.g., inductive coupling, RF WPT can operate across greater distances (up to several km) \cite{kim2014PIEEE}. Furthermore, as communication networks grow denser, harvesting ambient RF energy becomes a more attractive possibility, which has prompted research into hardware improvements \cite{fallahpour2022ICEE,Bougas2024MOCAST}.

In the context of communication systems, the past several years have seen increased interest in simultaneous wireless information and power transfer (SWIPT), particularly in the context of IoT networks and cooperative relaying (CR) \cite{IoTJ2021LiWangLiuLiPengPiranLi, TWC2016LiuKimKwakPoor, Zhai2020TVT,Hossain2019Access, Ropokis2022TVT, Ashraf2021Access}. 
\cite{IoTJ2021LiWangLiuLiPengPiranLi} considers a single BS using non-orthogonal multiple access (NOMA) to communicate with a near user and a far user while aided by a set of energy-harvesting relays. It then focuses on calculating the outage probability and finding the optimal power allocation while accounting for residual hardware errors and channel estimation errors. A SWIPT setup using antenna diversity receives attention in \cite{TWC2016LiuKimKwakPoor}, which compares the performance of full-duplex to half-duplex relaying. The energy-harvesting relay node employs two batteries for greater efficiency, one for energy harvesting, the other to power the transmission. The outage probability and optimal power splitting ratios are found. In \cite{Zhai2020TVT}, the authors propose a wireless communication system integrating WPT and CR to enhance node operations and communication reliability. They explore both orthogonal and non-orthogonal CR schemes where multiple sources communicate with destinations via an energy-harvesting relay, adapting transmission strategies based on the relay decoding status. Numerical analysis demonstrates that the non-orthogonal CR scheme outperforms the orthogonal scheme in certain scenarios, influenced by parameters such as transmit powers, data rates, and terminal distances. \cite{Ropokis2022TVT} proposes a cooperative communication scheme based on WPT-enabled relaying that uses both time splitting (TS) and antenna switching (AS) paradigms, as well as self-energy recycling (SER). Splitting and switching ratios are optimized to maximize the communication rate, and the scheme is shown to outperform comparable designs. A comprehensive survey of WPT for CR is found in \cite{Ashraf2021Access}. 

The aforementioned body of work focuses on cooperative settings, so that the source and the relay are generally analyzed or optimized as a single unit. In contrast, we consider a non-cooperative scenario, where other nodes must be incentivized to provide the relaying service. This approach allows for WPT-enabled relaying to occur dynamically, rather than relying on the deployment of dedicated relays or assumptions of cooperation. As opposed to existing literature, this perspective is well-suited for analysis via the tools of game theory, and we will make particular use of classical results in auction theory.

Many aspects of communication systems are well modeled as the interaction of self-interested agents. Consequently, game theory and auctions have long been used to approach a wide range of problems, including resource allocation, relay selection, and power management \cite{GTforWCN2012HanNiyatoSaadBasarHJorungnes, TRO19_MuralidharanMostofi}. Many works sit specifically at the intersection of WPT and game theory \cite{TWC2018ZhangDuChengLongLeung, ISIT2014Chen, Ni2019ICCT, Cheng2023TNSM, TWC2014Ding}, including several that have used non-cooperative games to understand how to manage interference, which, while detrimental to communication, may be harvested for additional energy \cite{TWC2018ZhangDuChengLongLeung, ISIT2014Chen}. Recent years have also seen the use of auctions to determine prices for WPT services. For example, \cite{Ni2019ICCT} purposes a peer-to-peer double auction mechanism to extend the battery life of mobile devices. More recently, \cite{Cheng2023TNSM} considers the interaction between a network operator and a power supplier which uses WPT to charge network nodes. A Stackleberg game models their interaction, and the game framework enables price design for the WPT service while accounting for both the costs and battery efficiency in the network. Auctions have also appeared in the context of WPT-enabled relaying. For example, in \cite{TWC2014Ding}, a single relay serves multiple source-destination pairs and must decide how to allocate the power received from the sources for communication to the destinations. An auction-based approach is formulated to determine how to allocate the received power among the several destinations, and its Nash equilibria are studied. In all these works, however, the auction payments are monetary.

While WPT and relay have appeared frequently in conjunction with auctions and other game-theoretical approaches, our work is unique in two fundamental ways. First, we use the auction to bring about WPT-enabled relaying in a non-cooperative setting. In this respect, our work is most similar to \cite{TWC2014Ding}. However in that paper, the auction determines how to allocate power which has already been harvested by a relay, while in our paper, the auction is used to select a suitable relay and determine the amount of energy to transfer via WPT. Second, we use energy itself as the medium of exchange. 

In non-cooperative networks, designing an effective relaying protocol is challenging due to the self-interested nature of the UEs, which aim to maximize their utility by extracting as much energy from the source as possible. Additionally, the source lacks knowledge of the channel quality between UEs and the AP, complicating the selection of an efficient relay. Furthermore, the protocol must balance energy efficiency and communication reliability, while avoiding excessive computation or communication overhead, making the design of incentive-compatible protocols in non-cooperative settings a complex yet critical task. Our work addresses each of these challenges, and we next state our contributions.

\vspace{-0.1in}
\subsection{Contributions}
This paper provides several contributions to the literature, as enumerated below:
\begin{itemize}
    \item We present the problem of using WPT to induce energy-efficient relaying in a non-cooperative setting by modeling the battery-powered relaying candidates as rational, utility-maximizing agents that must be motivated to act as relays through a payment of energy. The source must choose which candidate to use and how much energy to send. However, it lacks knowledge of the channel quality between the candidates and the AP, and the non-cooperative candidates may not reveal this information as they seek to maximize their own utilities. We propose the use of WPT-based auctions, implemented in a simple protocol, to achieve the desired cooperation, and for comparison, present a cooperative baseline.

    \item For Vickrey WPT auctions (VWAs), we mathematically prove that the outage probability is minimized, that is, communication will occur whenever possible, as is in the case of the cooperative baseline. We show that this outage probability decays exponentially as the number of candidates grows, and we mathematically characterize the energy efficiency gap compared to the cooperative baseline. 

    \item As an alternative, we propose Myerson WPT auctions (MWAs), which significantly improve energy efficiency. For the pricing mechanism to be computationally tractable, these auctions must satisfy a regularity condition. We first rigorously develop a novel extension of the classical regularity result to reverse Myerson auctions, then mathematically proving the regularity of the MWA with both lognormal and Rayleigh fading. We then mathematically characterize the outage probability, and show that, although the outage probability increases compared to the VWA, the gap in outage probability decays exponentially as the number of candidates grows.
    
    \item Finally, we present several numerical studies that illustrate how system parameters (e.g., the number of candidates, the fading parameters, and the auction type) affect the outage probability, the source's energy consumption, and the energy harvested by the candidate. Comparing the results for the VWA and MWA, we offer insights into the tradeoffs of each auction type and give guidance when one auction type is preferred over the other.
\end{itemize}

Aspects of this work appear in our conference paper \cite{hurst2024GLOBECOM}, which focuses on Myerson auctions with lognormal fading. As well as providing detailed proofs omitted from \cite{hurst2024GLOBECOM} due to space constraints, this paper analyzes Vickrey auctions and further examines the Myerson auction under a Rayleigh fading model. Furthermore, we provide an in-depth comparison of the Vickrey and Myerson auctions and give guidance on when one type of auction is preferred over the other.

The rest of the paper is organized as follows. In Section~\ref{sec:modeling}, we lay out the communication and WPT models used throughout the paper, and in Section~\ref{sec:auction}, we present our auction-based protocol. We then analyze systems performance of the protocol when using a Vickrey auction in Section~\ref{sec:vickrey} and when using a Myerson auction in Section~\ref{sec:myerson}. Section~\ref{sec:results} then corroborates the analytical results and illustrates the impact of various system parameters, and Section~\ref{sec:conclusion} concludes the paper.

\section{System Modeling}\label{sec:modeling}

%
We consider a system consisting of a source UE; a set of $n$ battery powered UEs, referred to as candidates; and a single access point (AP). The source is located at $q_s\in\mathbb{R}^2$, and the candidates are located at positions $q_i\in \mathbb{R}^2,\,\forall i\in \mathcal{N}$, where $\mathcal{N}=\{1,...,n\}$. We treat $q_i$ as independently distributed over the region of interest, and without loss of generality, we assume the access point is located at the origin. The system is illustrated in Fig.~\ref{fig:math_fig}.

\begin{figure}
    \centering
    \includegraphics[width=0.9\linewidth]{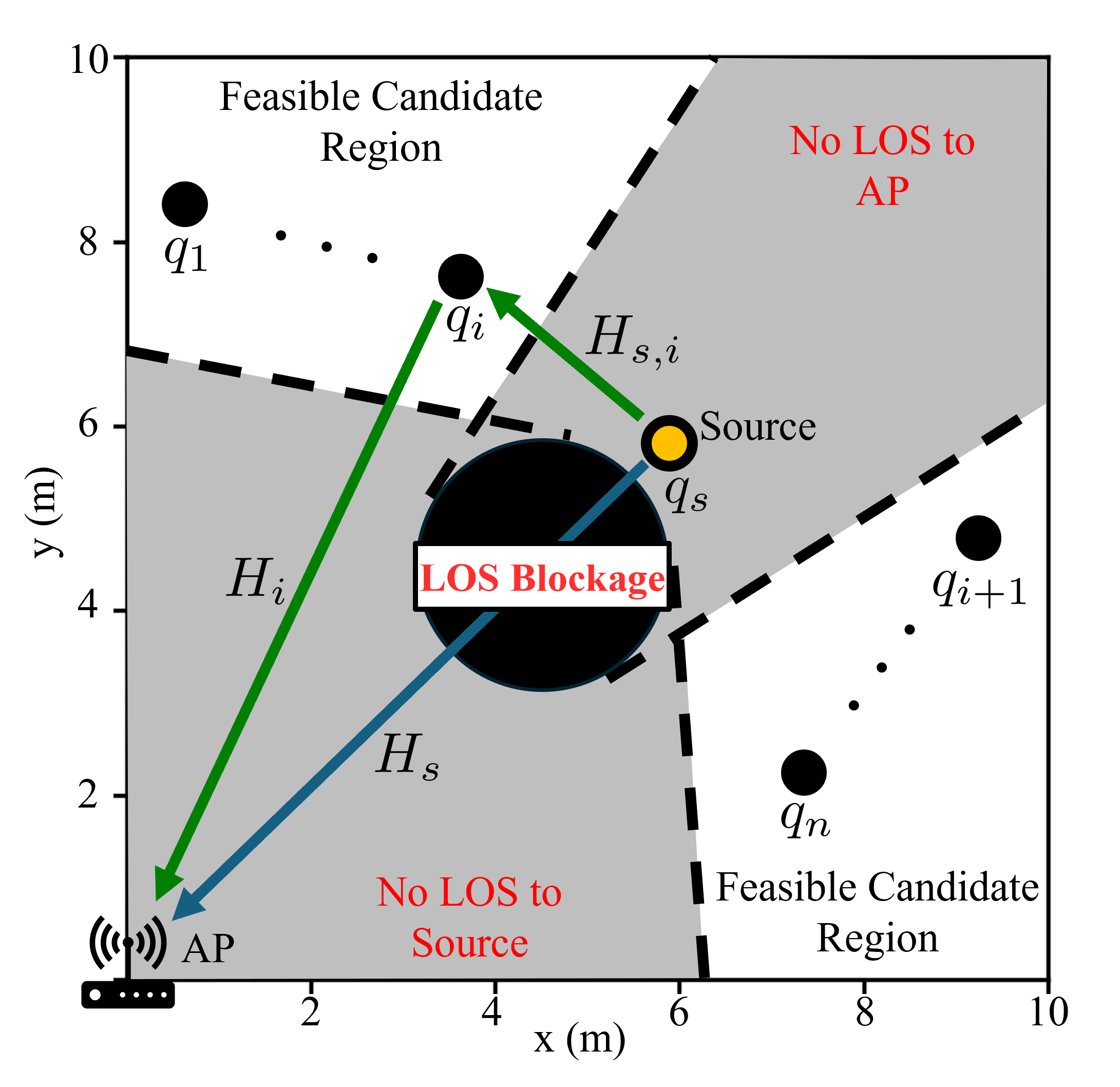}
    \vspace{-0.1in}
    \caption{System setup for the WPT-enabled relaying scenario. The orange dot gives the location of the source, $q_s$, the black dots give the locations of the candidates, $q_i$, and the AP is at the origin. The large black circle in the center indicates the blockage, and the gray region indicates the area where the channel to either the source or the AP is NLOS. The arrows represent channels, with the green arrows representing channels used if UE $i$ acts as the relay. The channels are labeled with corresponding channel power variables.}
    \label{fig:math_fig}
    \vspace{-0.2in}
\end{figure}

The source must transfer data to the AP in a timely manner. To do so, it may either try to send the data itself, or it may try to induce one of the candidates to serve as a relay, incentivizing cooperation using WPT. In the rest of this section, we provide our communication model, including channel modeling, as well as our model for WPT.

\begin{table*}
    \centering
    \begin{tabular}{|c|c||c|c||c|c|}
    \hline
         $q_s$ & Source position & $q_i$ & $i^\text{th}$ candidate position & $\zeta$ & Min. required received power\\
         \hline
          $\sigma^2$ & Receiver noise power& $H_s$ & Channel power, source to AP & $H_i$ & Channel power, candidate $i$ to AP \\
          \hline
           $H_{s,i}$ & Channel power, source to candidate $i$ & $P_{\text{tot}}$& Total source transmit power & $P_{\text{max}}$ & Max. transmit power\\
           \hline
           $D$ & Bandwidth-normalized data requirement & $T$ & Required transfer time & $\tilde{\alpha}_i$ & Overall WPT efficiency for candidate $i$\\
           \hline
           $P_s$ & Min. TX power, source to AP & $P_{s,i}$ & Min. TX power, source to candidate $i$ & $P_{i}$& Min. TX power, candidate $i$ to AP\\
           \hline
    \end{tabular}
    \caption{Important system variables}
    \label{tab:impt_vars}
    \vspace{-0.2in}
\end{table*}

\vspace{-0.1in}
\subsection{Channel Modeling and Communication}
Let the source-to-AP, candidate-to-AP, and source-to-candidate wireless communication channel power (square of the modulus of the complex-valued channel gain) be denoted $H_s$, $H_i$, and $H_{s,i}$, respectively. We assume that these are sufficiently well-estimated by both the transmitter and receiver, so that the source knows $H_s$ and $H_{s,i},\,\forall i \in\mathcal{N}$, and the $i^{\text{th}}$ candidate knows $H_{s,i}$ and $H_i$. However, the source and candidates do not know channel coefficients for channels which they do not use, so that the source does not know $H_i,\,\forall i \in \mathcal{N}$, and the $i^{\text{th}}$ candidate does not know $H_s$ and $H_j,\;\forall j \in\mathcal{N}\setminus i$. Thus, in auction terminology, the power of the candidate-to-AP channel, $H_i$, constitutes \textit{private information}.

As in \cite{hurst2024GLOBECOM}, we model each channel as composed of path loss and small-scale fading components, with large-scale shadowing modeled by using different parameters for LOS and non-LOS (NLOS) channels. More specifically, for $l\in\{s, i, si\}$ we have $ H_l = H_{l, \text{PL}} H_{l, \text{f}}$, where $H_{l, \text{PL}}$ is the path loss component, and $H_{l,\text{f}}$ is the fading component, commonly modeled as a lognormal, Rayleigh, or Rician random variable \cite{Samimi2016VTC}. The path loss component is modeled as $H_{l, \text{PL}} = K d_l^{-\eta}$, with $K$ and $\eta$ the path loss intercept and exponent, respectively, and $d_l$ the distance between the transmitter and receiver. Path loss parameters $K$ and $\eta$ take on different values for the LOS and NLOS cases, and we distinguish these values with LOS and NLOS superscripts (e.g., $K^{\text{LOS}}$, $K^{\text{NLOS}}$). We will assume that $H_s$ is a NLOS link, motivating the need to find a relay, and that $H_{i}$, $H_{s,i}$ are LOS links, as otherwise, candidate $i$ would almost surely provide no advantage over the direct NLOS link.

To model channel capacity, we rely on the Shannon-Hartley theorem, so that the maximum achievable spectral efficiency over the channel is
\begin{equation}\label{eq:max_rate}
    r_{\text{max}} = \log_2\left(1+P_{TX}H_l/\sigma^2\right),
\end{equation}
where $P_{\text{TX}}$, $H_l$, and $\sigma^2$ are the transmit power, channel power, and receiver noise power, respectively. We further assume that the source and all UEs have a maximum transmit power $P_{\text{max}}$. Consequently, for transmission with spectral efficiency $r$ to be successful, we must have $r \leq \log_2\left(1+P_{\text{max}}H_l/\sigma^2\right)$.

\vspace{-0.15in}
\subsection{Data Transfer Requirements}
\vspace{-0.05in}
The source must transfer a given amount of data to the destination within a fixed amount of time, $T$. This models scenarios with hard time-related constraints, such as vehicular systems (V2X), AR/VR, or industrial automation. We do not assume that satisfying the time constraint is always feasible, and thus we examine system performance in terms of both energy consumption and outage probability. We note that using the maximum allowed time, $T$, results in the lowest energy usage, so that, without loss of generality, we will assume the full time $T$ is always used for the relay procedure.

Let $D$ represent the bandwidth-normalized amount of data (in bit/Hz) that must be transferred. The specification of both $T$ and $D$ induces a spectral efficiency requirement of $r > D/T\,$(bit/s)/Hz, so that for the source to transfer the data directly to the AP, its minimal transmit power is $P_s =  \zeta/H_{s}$, where $\zeta = (2^{D/T}-1)\sigma^2$ is the minimum required received power to communicate at a rate of $D/T$. For deep fades (poor channel conditions) or short transmit times, this value may be quite large, and may in fact exceed $P_{\text{max}}$. Alternatively, the source may use WPT to incentivize a candidate to act as a decode-and-forward relay to the AP. 

\vspace{-0.15in}
\subsection{Relaying and WPT}
Relaying data to the AP via one of the candidates requires two transmissions, and we assume that, due to either the use of orthogonal frequency bands or full-duplex relaying capabilities, additional delay due to the relaying operation is negligible. First, to transmit the data to the candidate, the source must transmit with a power of at least $P_{s,i} = \zeta/H_{s,i}$.
Then, the candidate must transmit the data to the AP with a power of at least $P_i =  \zeta/H_{i}$. To induce cooperation, however, the source must also send enough energy to the candidate to at least cover the cost of transmission. Let $P_{\text{WPT}}$ be the power allocated for wireless power transfer. The power harvested by the $i^\text{th}$ UE, $P_{\text{harv},i}$,  depends on a number of factors, including the channel between the source and the candidate, $H_{s,i}$, the effective aperture of the receiver antenna, $A_r$, and the energy-harvesting circuitry efficiency, $\alpha$ \cite{Hossain2019Access}. Formally, we have $P_{\text{harv},i} = P_{\text{WPT}}H_{s,i}A_r\alpha = \Tilde{\alpha}_iP_{\text{WPT}}$
where $\Tilde{\alpha}_i = H_{s,i}A_r\alpha$ is the overall efficiency of the WPT.

We consider power-splitting WPT \cite{Zhou2012GLOBECOM} with simultaneous wireless information and power transfer (SWIPT), so that if the source chooses candidate $i$ as the relay, its total power consumption is $P_{\text{tot}} = P_{s,i} + P_{\text{WPT}}$, with $P_{s,i}/P_{\text{tot}}$ the receiver's power splitting ratio. \textbf{The key question addressed in this paper is this: how can the source determine which candidate to use (if any) and total transmit power, $P_{\text{tot}}$, in this non-cooperative setting?}

\section{An Auction Formulation}\label{sec:auction}
We view the source and all candidates as rational decision makers seeking to non-cooperatively maximize their individual utilities. The source and the candidates both seek to minimize their net energy expenditure. Additionally, the source must ensure that the data is offloaded within the time limit, $T$. These preferences are captured in the following utility functions for the source and candidates, denoted as $U_s$ and $U_i$, respectively:
\begin{equation}\label{eq:utilities}
\begin{split}
    U_{s} &= Cz -T P_{\text{tot}},\\
    U_i &= \begin{cases}
        0 & \text{If } \rho \neq i\\
         T (\tilde{\alpha}_i (P_{\text{tot}} - P_{s,i}) - P_i) & \text{If } \rho = i
    \end{cases},
\end{split}
\end{equation}
where $C$ is a large reward modeling successful data transmission to the AP within the time budget, $z\in\{0,1\}$ indicates whether this is actually achieved, and $\rho$ is the index of the candidate chosen for relaying (if any). To ensure the source prefers communicating whenever possible, we have $C>T P_\text{max}$. Smaller values of $C$ emphasize energy efficiency, while larger values of $C$ place greater weight on ensuring successful communication. We also assume that the transmission is sufficiently spatially constrained so that if a candidate is not chosen as the relay, it cannot harvest energy, though relaxing this assumption is an important direction for future work.

A reverse auction, sometimes called a procurement auction, naturally frames the considered scenario. In this framework, the source looks to ``buy" relaying services from the candidates, who bid for the opportunity to provide the service. In practice, the communication protocol proceeds as follows: (1) The source broadcasts a request for bids; (2) The candidates respond with bids; (3) the source broadcasts its choice of the winner, $\rho$, as well as the price, $P_\text{tot}$; (4) the winner acknowledges; (5) the relaying process begins. This process disallows extended bargaining between the source and candidates, which is desirable given the potential costs in both time and energy that could result. For ease of exposition, we consider bids as proposed values of the total source transmit power, $P_\text{tot}$. This is equivalent to bids in terms of total source energy usage, $TP_\text{tot}$, since the source will always use the full time to transmit. Furthermore, since both the source and candidate know the value of $H_{s,i}$, this is equivalent to expressing bids as proposed values of $P_\text{WPT}$. We next present details of the auction-based protocol.

\vspace{-0.1in}
\subsection{Auction Components}
We next review the components of an auction and their corresponding concepts in our WPT-enabled relaying setting.

\textit{Valuations:} In a reverse auction, the valuations, $\mathcal{V} = \{v_0, ..., v_n\}$ indicate how much the service is worth to the participants, where $v_0$ is the value of the service to the procurer, and $v_i,\,\forall i \in \mathcal{N}$ is the cost that would be incurred by the bidders if they had to provide the service.

Thus, the valuation of the source (the service procurer) is the power required to directly transmit the data to the AP. Recall that $P_s$ is the minimum power required for direct transmission to the AP. If $P_s \leq P_{\text{max}}$, then $v_0 = P_s$. However, if $P_s > P_{\text{max}}$  the source cannot transmit the data directly within the time frame. In this case, we set $v_0 = P_{\text{max}}$, with the interpretation that this is the maximum power the source could possibly use for WPT payment. Consequently, $v_0 \leq P_{\text{max}}$ always holds.

The valuations of the candidates may be similarly modeled. If the source elects to send data via the $i^{\text{th}}$ candidate, it must transmit with a power of at least $P_{s,i} + P_i/\tilde{\alpha}_i$, which ensures non-negative utility for the $i^{\text{th}}$ candidate. Thus, $v_i = P_{s,i} + P_i/\tilde{\alpha}_i$. Note that, to facilitate comparison, the valuation is expressed in terms of the power required by the source, not the power used by the candidate. Importantly, there is a one-to-one mapping between $v_i$ and the private information $H_i$, so that bidding $v_i$ is equivalent to revealing $H_i$. 

\textit{Bids:} The candidates' valuations depend on private channel information, $H_i$, which is not known by the source. Thus, in a general setting, the candidates could provide bids that differ from their true valuations, as their goal is to maximize their profit. However, this paper considers two specific auction rules (\textit{mechanisms}) under which bidding the true valuation maximizes a candidate's utility. These \textit{incentive-compatible} mechanisms have been well-studied in game theory literature and applied across a variety of domains~\cite{AGT2007}. As we discuss in greater detail later in this section, using a mechanism that results in the candidates' biding their true valuations differs in subtle but important ways form assuming that the source knows all the valuations beforehand or that the candidates act cooperatively with the source.

\textit{Mechanisms:} A mechanism is defined by two functions, $\rho(\mathcal{V})$ and $X(\mathcal{V})$, which take as arguments the set of valuations, $\mathcal{V}$, and determine the winner of the auction and the payment to the winner, respectively. In the context of the WPT scenario, $\rho(\mathcal{V})$ indicates which of the candidates will act as the relay, and $X(\mathcal{V})$ gives $P_{\text{tot}}$. Note that $\rho(\mathcal{V}) = 0$ indicates the source will not try to relay through any of the candidates. If that is the case and $P_s \leq P_{\text{max}}$, the source successfully sends the data directly. Otherwise, a communication outage occurs, and we assume that the source, knowing that the transmission will be unsuccessful, uses $P_{\text{tot}} = 0$. More generally, it will not transmit at all if communication cannot occur within the given time frame through any of the candidates or itself.

\vspace{-0.1in}
\subsection{System Metrics}
The trade-off between communication reliability and energy efficiency constitutes a fundamental challenge in our communication system, and we consider the outage probability and energy consumption as the core performance metrics of our proposed protocol. We next discuss each of these in more detail.

\subsubsection{Outage Probability}
For a given mechanism, the outage probability is the probability that the mechanism does not successfully transfer the data within the time limit, $T$, and we define this probability in the \textit{ex-ante} sense, i.e., considering randomness both in the candidate placement and fading realization. An outage may happen because either there is no feasible route for communication, i.e., $P_s> P_{\text{max}}$ and $v_i > P_{\text{max}},\,\forall i\in\mathcal{N}$, or due to the auction mechanism itself. This leads to the following definition:
\begin{definition}\label{def:min_op}
    A mechanism is said to \textbf{minimize the outage probability} if an outage occurs only when the quality of communication channels precludes data offloading, via any of the candidates or via the source itself, within the time frame, $T$. Formally, this is expressed as $\left(\left(\rho(\mathcal{V})=  0 \right)\land (P_s > P_{\text{max}}) \right)\implies \left(\min_{i\in\mathcal{N}} v_i > P_{\text{max}}\right)$. The corresponding \textbf{minimum outage probability} is denoted as $p_{\text{out}}^*$.
\end{definition}
Importantly, when using a mechanism that minimizes the outage probability, communication occurs whenever feasible. 

\subsubsection{Energy Consumption}
The total energy consumption in the system is given by the sum of the source's energy consumption, $T P_{\text{tot}}$, and the possible net energy harvested by the winning candidate, $\tilde{\alpha}_{\rho}T(P_{\text{tot}} - v_{\rho})$, where the dependence of $\rho$ on $\mathcal{V}$ has been dropped for readability in the subscript. Due to the inherent inefficiency of WPT, the candidate harvests much less energy than the source uses. Thus, when discussing energy efficiency, we focus on the source's energy consumption. In out analysis presented in Sections~\ref{sec:vickrey} and \ref{sec:myerson}, we discuss energy consumption in the \textit{ex-post} sense, i.e., considering the the final outcome of the protocol for specific realizations of candidate placement and channel fading. In Section~\ref{sec:results}, numerical results show energy consumption in the \textit{ex-ante} sense, i.e., as expected or average values.

\vspace{-0.1in}
\subsection{A Cooperative Baseline}
To provide a baseline of comparison, we briefly consider the scenario where the source and candidates work cooperatively, so that rather than seeking to maximize their utilities, the candidates are content as long as they are sent enough power to cover the cost of relaying the data on to the AP. In this scenario, the candidates all report $v_i$, and the source relays via the best candidate, $i^* = \argmin_{i} v_i$, using transmit corresponding power $v_{i^*}$ (or if $i^* = 0$ and $P_s > P_{\text{max}}$, it would know that communication is impossible and not transmit at all). Thus, the source could directly minimize its energy consumption.

The cooperative baseline has several nice properties. Clearly, it achieves the minimum outage probability, $p_\text{out}^*$, while also minimizing the source's energy expenditure. Furthermore, it minimizes the system-wide energy expenditure and maximizes the social welfare, that is, the sum of all utilities, as can easily be shown.

Under incentive-compatible mechanisms, candidates also report their valuations, $v_i$, but in contrast, the structure of the mechanism prevents the source from minimizing energy usage in this same way, as candidates only bid their true valuations due to the potential for a positive net energy gain. Fundamentally, it is the lack of information on $H_i$ that leads to the difference in outcomes between the cooperative and non-cooperative scenarios. If the source knows $H_i,\,\forall i \in \mathcal{N}$ beforehand in the non-cooperative case, it can make a take-it-or-leave-it offer to the best candidate, which the candidate would accept since it still has non-negative utility. Thus, the source could achieve the minimal energy consumption even in a non-cooperative setting if it knew $H_i$.

We next evaluate the system under a Vickrey-auction-based protocol. As we shall see, the lack of cooperation leads to some performance loss, but as the number of candidates increases, the results of the auction approach those of the cooperative baseline, as verified in Section~\ref{sec:results}.

\section{Vickrey WPT Auctions (VWA)}\label{sec:vickrey}
This section considers system performance when the protocol described in Section~\ref{sec:auction} employs a Vickrey auction, and we refer to this protocol as the Vickrey WPT Auction (VWA). We first present the Vickrey mechanism, then analyze the outage probability, mathematically characterizing it as a function of the number of candidates. Finally, we discuss energy consumption, which motivates the Myerson mechanism introduced in Section~\ref{sec:myerson}.

\vspace{-0.1in}
\subsection{The Vickrey Mechanism}
The standard Vickrey auction~\cite{vickrey1961JoF} is a classical incentive-compatible mechanism. For a reverse auction, it is defined as:
\begin{equation}\label{eq:vickrey_mechanism}
\begin{split}
    \rho^\text{V}(\mathcal{V}) &= i^* = \argmin_{i\in\mathcal{N}\cup \{0\}} v_i\\
    X^\text{V}(\mathcal{V}) &= \begin{cases}
        v_{\hat{i}} & \text{If } \rho^V(\mathcal{V}) \neq 0\\
        v_0 & \text{If } \rho^V(\mathcal{V}) = 0\\
    \end{cases},
\end{split}
\vspace{-0.1in}
\end{equation}
where $\hat{i} = \argmin_{i\in\mathcal{N}\cup \{0\}\setminus i^*} v_i$
In words, the participant, either candidate or source, with the lowest valuation wins\footnote{Valuations are dependent on communication channel powers, which are continuous random variables, so that we will almost surely never have two valuations equal to each other, and we omit tie-breaking rules for brevity.}. If the winner is a candidate, then the source pays the amount of the second-lowest valuation, while if the winner is the source itself, it ``pays" itself its own valuation (no transfer is made).
\vspace{-0.1in}
\subsection{Outage Probability}
Recall from Section~\ref{sec:auction} that if data transfer is not complete within the time frame $T$, we say an outage has occurred, and let $p_\text{out}^V$ denote the outage probability under the VWA. We now prove that the VWA minimizes the outage probability (Definition~\ref{def:min_op}) before mathematically characterizing this minimum outage probability.

\begin{lemma}\label{lma:vick_min_op}
    The VWA minimizes the outage probability, as described in Definition~\ref{def:min_op}.
\end{lemma}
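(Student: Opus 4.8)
The plan is to verify the implication in Definition~\ref{def:min_op} directly by a case analysis on the winner $\rho^\text{V}(\mathcal{V})$ produced by the mechanism in \eqref{eq:vickrey_mechanism}, exploiting throughout the two structural facts that the source's own valuation satisfies $v_0 \leq P_{\text{max}}$ and that $v_0$ participates in the auction as a bidder. First I would pin down precisely when an outage occurs: because the payment $X^\text{V}(\mathcal{V})$ \emph{is} the source's prescribed total transmit power $P_{\text{tot}}$, a relaying route is feasible exactly when $X^\text{V}(\mathcal{V}) \leq P_{\text{max}}$, and the direct route is feasible exactly when $P_s \leq P_{\text{max}}$. The goal is then to show that the only infeasible outcome is the one already sanctioned by the right-hand side of the definition.

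For the case where a candidate wins, $\rho^\text{V}(\mathcal{V}) = i^* \neq 0$, the payment is the runner-up valuation $v_{\hat{i}}$. The key observation is that since the source (index $0$) is itself a participant with $v_0 \leq P_{\text{max}}$, and $v_{i^*}$ is the overall minimum, the second-smallest valuation cannot exceed $v_0$: either $v_0$ is itself the runner-up, so $v_{\hat{i}} = v_0 \leq P_{\text{max}}$, or some other candidate lies between $v_{i^*}$ and $v_0$, so $v_{\hat{i}} < v_0 \leq P_{\text{max}}$. In either subcase $P_{\text{tot}} = X^\text{V}(\mathcal{V}) = v_{\hat{i}} \leq P_{\text{max}}$, the relaying transmission is feasible, and no outage occurs.

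For the remaining case $\rho^\text{V}(\mathcal{V}) = 0$, the source won, so $v_0 = \min_{i\in\mathcal{N}\cup\{0\}} v_i$. If $P_s \leq P_{\text{max}}$ the source transmits directly and no outage occurs, so the antecedent of the implication is false. If instead $P_s > P_{\text{max}}$, then $v_0 = P_{\text{max}}$ by definition and an outage does occur; but since the source won and valuations are almost surely distinct (the footnote to \eqref{eq:vickrey_mechanism}), we have $v_0 < v_i$, hence $v_i > P_{\text{max}}$, for every candidate $i\in\mathcal{N}$. Thus $\min_{i\in\mathcal{N}} v_i > P_{\text{max}}$, which is exactly the consequent required by Definition~\ref{def:min_op}, completing the case analysis.

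I expect the feasibility bound in the candidate-winning case to be the main obstacle to state rigorously: a natural worry is that the second-price rule might demand a payment exceeding the source's budget $P_{\text{max}}$ and thereby force an avoidable outage. The resolution hinges entirely on the deliberate inclusion of the capped valuation $v_0 \leq P_{\text{max}}$ as a bidder in \eqref{eq:vickrey_mechanism}, which functions as an implicit reserve and guarantees the runner-up valuation is always affordable. Making this precise requires tracking the position of $v_0$ within the full valuation ordering rather than reasoning about the candidate valuations in isolation, which is the one step I would take care to argue explicitly.
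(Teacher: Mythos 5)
Your proof is correct, and its decisive step---in the case $\rho^{\text{V}}(\mathcal{V})=0$ with $P_s > P_{\text{max}}$, the source winning forces $v_0 = P_{\text{max}} < \min_{i\in\mathcal{N}} v_i$---is exactly the paper's own (one-line) proof of the implication in Definition~\ref{def:min_op}. Your additional case, showing that a winning candidate's payment satisfies $v_{\hat{i}} \leq v_0 \leq P_{\text{max}}$ because the capped valuation $v_0$ acts as an implicit reserve, is sound and nicely justifies the formalization's implicit assumption that no outage occurs when a candidate wins, but it is not required to establish the formal statement the paper proves.
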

\begin{proof}
    Recall that if $P_s > P_{\text{max}}$, we have $v_0 = P_{\text{max}}$, and suppose that $P_s > P_{\text{max}}$ and $\rho^V(\mathcal{V}) = 0$. By the definition of $\rho^V(\mathcal{V})$in Eq.~(\ref{eq:vickrey_mechanism}), this implies $v_0 = P_{\text{max}} < \min_{i\in\mathcal{N}} v_i$, which satisfies the condition stated in Definition~\ref{def:min_op}.
\end{proof}
\textbf{As a consequence of Lemma~\ref{lma:vick_min_op}, we have $p_{\text{out}}^V = p_{\text{out}}^*$, and the VWA achieves the same outage probability as the cooperative baseline.} We next mathematically characterize this minimum outage probability.

From the discussion of the valuations in Section~\ref{sec:auction}, we see that an outage occurs if and only if $\min_{i} v_i > P_{\text{max}}$ and $P_{s} > P_{\text{max}}$. For the source, this implies $P_{\max} < \zeta/H_s $, while for each individual candidate, we must have $P_{\text{max}} <  \zeta (1/(\alpha A_rH_{s,i}H_i) + 1/H_{s,i})$. Recall from Section~\ref{sec:modeling} that $H_l = H_{l,\text{PL}} H_{l,\text{f}}$ for $l\in\{s,i,si\}$, where the path loss component $H_{l,\text{PL}}$ is determined by the distance between the two nodes on the link. Let $F_{H_{\text{f}}^{\text{NLOS}}}(h_{\text{f}})$ denote the CDF of the fading in the NLOS case. For a given source location $q_s$, we can then calculate the outage probability at the source as follows:

\begin{equation*}
    \begin{split}
        p_{\text{out},s} &= \text{Pr}\left(P_{\max} < \zeta/H_s\right) = F_{H_{\text{f}}^\text{NLOS}}\left(\zeta/(H_{s,\text{PL}}P_{\text{max}})\right).
    \end{split}
\end{equation*}

The event that the individual candidates experience an outage, i.e., cannot feasibly act as relays, is independently and identically distributed (i.i.d.). To see this, first note that channel measurement campaigns indicate that small-scale fading spatially decorrelates within a few wavelengths, which for high frequencies means within a few centimeters~\cite{Samimi2016VTC}, and we treat fading as independent across channels. We also model the candidates' placement as independent, so that the path loss components of the channel powers are independent as well.

Let $H_c$ and $H_{s,c}$ denote the candidate-to-AP and the source-to-candidate channel power, respectively, for an arbitrary candidate. Further, let $f_{Q}(q)$ denote the PDF of the spatial distribution of the generic candidate, and let $f_{H_{\text{f}}^{\text{LOS}}}(h_{\text{f}})$ and $F_{H_{\text{f}}^{\text{LOS}}}(h_{\text{f}})$ denote the PDF and CDF of the LOS fading, respectively. Then integrating over possible candidate placements and channels realizations, we calculate the outage probability for an arbitrary candidate as:
\vspace{-0.08in}
\begin{equation}\label{eq:candidate_op}
    \begin{split}
        p_{\text{out,c}} &= \text{Pr} \left(P_{\text{max}} <  \zeta (1/(\alpha A_r H_{s,c}H_c) + 1/H_{s,c})\right) \\
        & = \hspace{-3pt} \int_{q\in \mathcal{Q}} \hspace{-5pt} f_{Q}(q)\hspace{-5pt} \int_{h= 0}^{\infty}  f_{H_\text{f}^{\text{LOS}}}(h) F_{H_\text{f}^\text{LOS}}\left(g(q, h) \right)\, dh dq,
    \end{split}
\end{equation}
where $\mathcal{Q}$ is the region in space such that the point $q\in \mathcal{Q}$ has a LOS channel to both the AP and source, and where
\begin{equation*}
    g(q, h)  = \frac{1}{\zeta}\left(\frac{P_\text{max}}{\alpha A_r H_{s,i,\text{PL}}(q)\, H_{i,\text{PL}}(q)\,h} + \frac{P_\text{max}}{H_{s,c,\text{PL}}(q)} \right)^{-1}\hspace{-0.05in}.
\end{equation*}
Here, $H_{c, \text{PL}}(q) = K^{\text{LOS}}||q||_2^{-\eta^{\text{LOS}}}$ and $H_{s,c, \text{PL}}(q) = K^{\text{LOS}}||q - q_s||_2^{-\eta^{\text{LOS}}}$ are the path loss component of the candidate-to-AP and source-to-candidate channels, respectively, given the candidate location, $q$. The overall outage probability is then given by:
\begin{equation}\label{eq:vick_p_out}
    p_{\text{out}}^* =p_\text{out}^V = p_{\text{out},s} {p_{\text{out,c}}}^n .
\end{equation}

The above expression shows that the outage probability decays exponentially as the number of candidates increases, with the base, $p_{\text{out,c}}$, dependent on the spatial distribution of the candidates, among other factors. \textbf{In summary, we have shown that the VWA minimizes the outage probability (see Definition~\ref{def:min_op}), we have mathematically characterized this minimum outage probability, and we have shown that the outage probability decays exponentially as the number of candidates increases.}

\vspace{-0.1in}
\subsection{Energy Consumption}
We next evaluate the system-wide energy consumption, assuming that communication is possible. If $i^* = 0$, then the energy consumption is simply $T v_{i^*} = T v_0 =T P_s$. Importantly, the energy consumption in this case is identical for the cooperative baseline and the VWA. When $i^* \neq 0$, however, the VWA results in additional energy usage. The total energy consumption is given by $Tv_{\hat{i}} - T\tilde{\alpha}_{i^*}(v_{\hat{i}} - v_{i^*})$, where the first term captures the source's energy consumption, and the second term accounts for the net energy harvested by the candidate. Comparing with the cooperative baseline, for which the energy consumption is given by $T v_{i^*}$, the difference in energy is
\vspace{-0.05in}
\begin{equation*}
    \Delta \mathcal{E} = (1-\tilde{\alpha}_{i^*})T(v_{\hat{i}} - v_{i^*}).
    \vspace{-0.05in}
\end{equation*}

The above expression accounts for the fact that the source uses $(v_{\hat{i}} - v_{i^*})$ more power, which is only slightly offset by the net $\tilde{\alpha}_{i^*}(v_{\hat{i}} - v_{i^*})$ power harvested at the relay, since in general $\tilde{\alpha}_{i^*}$ is orders of magnitude less than 1. \textbf{Thus, the VWA is less energy efficient than the cooperative baseline.}

The expression above indicates two approaches for improving energy efficiency. First, improving WPT efficiency $\tilde{\alpha}_i$ through, e.g., hardware improvements, reduces the cost of energy transfer. Second, reducing the gap between $v_{\hat{i}}$ and $v_{i^*}$ will also bring the performance of the VWA more closely in line with cooperative baseline. This could be achieved by increasing the number of candidates, as illustrated in Section~\ref{sec:results}. Alternatively, the source may employ a different type of auction which, on average, brings its total TX power, $X(\mathcal{V}) = P_{\text{tot}}$, closer to the minimum, $v_{i^*}$, while maintaining incentive compatibility. This possibility is explored next.

\section{Myerson WPT Auctions (MWA)}\label{sec:myerson}
In this section, we consider an alternative protocol based on a Myerson auction, which reduces the source's energy usage, and we refer to this protocol as the Myerson WPT Auction (MWA). Myerson auctions \cite{Myerson1981MOR} are known to maximize the auctioneer's utility, so that on average, they will minimize the source's energy consumption. These auctions are defined within a Bayesian setting, so the source must know the probability distribution from which the candidates' valuations are drawn. For the MWA, this amounts to knowing the distributions over $H_i,\,\forall i \in \mathcal{N}$. With this additional information, it's possible to construct a mechanism that, on average, improves the energy efficiency of the system.

The general Myerson mechanism can be computationally cumbersome, but when a set of regularity conditions are met, the mechanism becomes easily computable. In the rest of the section, \textbf{we first extend classical results on regular Myerson auctions to the reverse auction setting and prove the regularity of the MWA for lognormal and Rayleigh fading. We then mathematically characterize the outage probability and show the improved energy efficiency.}

\vspace{-0.1in}
\subsection{Reverse Myerson Auctions}
In this section, we develop a novel extension of results for Myerson auctions. Myerson auctions resemble Vickrey auctions, but rather than comparing valuations directly, \textit{virtual valuations} are used to determine the winner and payment. In general, determination of virtual valuations requires a potentially complex ``ironing" procedure~\cite{Myerson1981MOR}. However, if the distributions over the valuation satisfy a regularity condition, the mechanism simplifies greatly. After first presenting this established regularity condition and associated mechanism for forward auctions, we extend the Myerson framework by mathematically proving the equivalent regularity condition and mechanism for reverse auctions.  

 Let $F_{V_i}(v_i)$ and $f_{V_i}(v_i)$ denote the CDF and PDF, respectively, of a generic valuation, $v_i$, with support $\Theta_i$. We now present the standard regularity definition.

\begin{definition}\label{def:reg}
      (Forward Regularity) Let $c_i^f(v_i) = v_i - \frac{1-F_{V_i}(v_i)}{f_{V_i}(v_i)}$. A forward Myerson auction is \textbf{regular} if $d c_i^f(v_i)/d v_i \geq 0,\,\forall v_i \in \Theta_i, \forall i \in\mathcal{N}$. Furthermore, $c_i^f(v_i)$ is the \textbf{forward auction virtual valuation}.
\end{definition}
Note that the virtual valuations are defined only for the bidders. However, for ease of notation, we define $c_0^f(v_0) = v_0$. We may now restate Myerson's result:
\begin{theorem}\label{thrm:forward_myerson}
    For regular forward Myerson auctions, the seller-optimal mechanism is given by:
    \begin{equation}\label{eq:myerson_forward_mechanism}
        \begin{split}
            \rho^{f,M}(\mathcal{V}) &= \max_{i}\; \left\{c_i^f(v_i)\right\}_{0=1}^n\\
            X^{f,M}(\mathcal{V}) & = \begin{cases}
                \omega^{f,M} & \text{If } \rho^{f,M}(\mathcal{V}) \neq 0\\
                0 & \text{If } \rho^{f,M}(\mathcal{V}) = 0
            \end{cases}
        \end{split}
    \end{equation}
    with $\omega^{f,M} = \inf \left\{s | c_{\rho^{f,M}}^f(s) > \max \{c_i^f(v_i)\}_{i=0}^n\right\}$.
\end{theorem}
\begin{proof}
    See \cite{Myerson1981MOR}.
\end{proof}
Thus, the winner of the forward auction is the bidder with the highest \textit{virtual} valuation, and their payment corresponds to the virtual valuation of the second-highest bidder. The above result holds for forward auctions, but our WPT-enabled relaying scenario is best modeled as a reverse auction, as the bidders are offering to provide a service at a cost. We next extend this result to the reverse auction:

\begin{definition}\label{def:reverse_reg}
      (Reverse Regularity) Let $c_i(v_i) = v_i + \frac{F_{V_i}(v_i)}{f_{V_i}(v_i)}$. A reverse Myerson auction is \textbf{regular} if $d c_i(v_i)/d v_i \geq 0,\,\forall v_i \in \Theta_i,\forall i \in\mathcal{N}$ Furthermore, $c_i$ is the \textbf{reverse auction virtual valuation}.
\end{definition}

For ease of notation, we again set $c_0(v_0) = v_0$, and we now extend Theorem~\ref{thrm:forward_myerson}.
\begin{corollary}\label{crl:reverse_acutions}
   For a regular reverse Myerson auction the buyer-optimal mechanism for reverse auctions is given by:
    \begin{equation}\label{eq:myerson_reverse_mechanism}
        \begin{split}
            \rho^M(\mathcal{V}) &= \min_{i}\quad \{c_i(v_i)\}_{i=0}^n\\
            X^M(\mathcal{V}) & = \begin{cases}
                \omega^M & \text{If } \rho^M(\mathcal{V}) \neq 0\\
                0 & \text{If } \rho^M(\mathcal{V}) = 0
            \end{cases},
        \end{split}
    \end{equation}
    with $\omega^M = \sup \left\{s | c_{\rho^M}(s) < \min \left\{c_i(v_i)\right\}_{i=0}^n\right\}$.
\end{corollary}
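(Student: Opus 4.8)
The plan is to prove Corollary~\ref{crl:reverse_acutions} by \emph{reduction} to the already-established forward result (Theorem~\ref{thrm:forward_myerson}) rather than redoing Myerson's derivation from scratch. The guiding observation is that a reverse (procurement) auction is a forward auction ``seen through a mirror'': a bidder with a low cost should behave exactly like a bidder with a high value. Concretely, I would introduce the reflected valuations $w_i = -v_i$ for each $i\in\mathcal{N}$ (and $w_0 = -v_0$ for the source's opt-out), inducing a companion forward auction whose bidders hold values $w_i$. The buyer's problem of \emph{minimizing} expected payment to procure the service is equivalent to the seller's problem of \emph{maximizing} expected revenue in the reflected auction under the price correspondence $p = -\pi$ (forward price equals the negative of the reverse payment): reversing the sign of every cost turns the procurer's outflow into a seller's inflow, and it flips the individual-rationality constraint, since a procured bidder needs payment at least its cost ($\pi \geq v$) exactly as a forward bidder needs value at least its price ($p \leq w$).

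The technical heart is a single change-of-variables identity. Under $w = -v$ the reflected CDF and PDF satisfy $F_{W_i}(w) = 1 - F_{V_i}(-w)$ and $f_{W_i}(w) = f_{V_i}(-w)$, so substituting into the forward virtual valuation of Definition~\ref{def:reg} gives
\begin{equation*}
    c_i^f(-v) = -v - \frac{1 - F_{W_i}(-v)}{f_{W_i}(-v)} = -v - \frac{F_{V_i}(v)}{f_{V_i}(v)} = -c_i(v),
\end{equation*}
so the forward virtual valuation of the reflected bidder is precisely the negative of the reverse virtual valuation of Definition~\ref{def:reverse_reg}, and the boundary convention $c_0^f(-v_0) = -v_0 = -c_0(v_0)$ is consistent. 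Differentiating, one finds $\left.dc_i^f/dw\right|_{w=-v} = dc_i/dv$, so the reflected forward auction is regular (Definition~\ref{def:reg}) if and only if the reverse auction is regular (Definition~\ref{def:reverse_reg}); hence no ironing is needed in either picture. I would verify this identity and the regularity equivalence carefully, since getting the sign of the $F/f$ term and the orientation right is exactly what separates the reverse virtual valuation from the forward one.

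With regularity transferred, I would invoke Theorem~\ref{thrm:forward_myerson} on the reflected auction. Selecting the bidder with the \emph{largest} forward virtual valuation $c_i^f = -c_i$ is identical to selecting the one with the \emph{smallest} reverse virtual valuation $c_i$, which is precisely $\rho^M(\mathcal{V}) = \min_i \{c_i(v_i)\}_{i=0}^n$. For the payment I would use the threshold-report characterization: the winner is paid the largest cost it could have reported and still won, holding the others fixed. Because $c_{\rho^M}$ is nondecreasing (regularity) the winning region in cost space is $\{s \le s^\star\}$, so this threshold is a \emph{supremum}, namely $\omega^M = \sup\{s \mid c_{\rho^M}(s) < \min_i \{c_i(v_i)\}_{i=0}^n\}$; the $\inf\!\to\!\sup$ switch and the $>\!\to\!<$ reversal both come from the reflection applied to the forward threshold $\omega^{f,M}$. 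Finally, the source's value $v_0$ plays the reserve role: $\rho^M = 0$ occurs exactly when $c_0 = v_0$ is the minimum, i.e. when every candidate's virtual cost exceeds the buyer's own cost, in which case $X^M = 0$, matching the case split in Eq.~(\ref{eq:myerson_reverse_mechanism}).

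The main obstacle I anticipate is bookkeeping the reversals rather than any deep new idea: the reflection simultaneously swaps $\inf$ with $\sup$, $\max$ with $\min$, and the direction of the IR constraint, so it is easy to drop a sign in the payment rule or to misplace a boundary term. A secondary subtlety is threading the opt-out/reserve $v_0$ through the reduction so that the $\rho^M = 0$ branch emerges correctly. As corroboration I would also sketch the direct route mirroring Myerson's original argument: apply the revelation principle, characterize incentive compatibility by monotonicity of the allocation plus an envelope payment formula, substitute that formula into the buyer's expected-payment objective, and integrate by parts to rewrite it as the expected reverse virtual cost; here the term $F_{V_i}/f_{V_i}$ (with its $+$ sign) arises because the boundary term vanishes at the \emph{lower} end of the support ($F=0$), the reverse-auction counterpart of the upper-boundary term in the forward proof. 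Regularity then guarantees that pointwise minimization of the virtual cost yields a monotone, hence implementable, allocation, completing the argument.
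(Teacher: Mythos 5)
Your proposal is correct and follows essentially the same route as the paper's own proof: reflect the reverse auction into a forward one via $v_i^f=-v_i$, establish the key identity $c_i^f(-v_i)=-c_i(v_i)$ (hence equivalence of the two regularity conditions since the derivatives coincide), and then read off the allocation and payment rules of Eq.~(\ref{eq:myerson_reverse_mechanism}) from Theorem~\ref{thrm:forward_myerson} applied to the reflected auction. Your write-up is in fact somewhat more careful than the paper's, explicitly deriving the reflected CDF/PDF relations $F_{W_i}(w)=1-F_{V_i}(-w)$, $f_{W_i}(w)=f_{V_i}(-w)$ and the threshold characterization of the payment, which the paper leaves implicit.
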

\begin{proof} Our proof consists of two parts. First, we show that Definitions~\ref{def:reg} and \ref{def:reverse_reg} are equivalent. Then, we show that the two mechanisms are equivalent as well. We first cast the reverse auction as a forward auction by negating the valuations. Let $v_i^f = -v_i$ denote the equivalent forward auction valuations. Then we have 
    \begin{equation}\label{eq:neg_of_VV}
    \begin{split}
     c_i^f(v_i^f) &= c_i^f(-v_i) = -v_i - \frac{1 - F_{V_i^f}(-v_i)}{f_{V_i^f}(-v_i)} \\
     & = -v_i -  \frac{F_{V_i}(v_i)}{f_{V_i}(v_i)} = -c_i(v_i).
    \end{split}
    \end{equation}
    Thus, $c_i(v_i)$ corresponds to the negation of the forward auction virtual valuations. Furthermore, we note that 
    \begin{equation*}
        \frac{d c_i^f(v_i^f)}{d v_i^f} = \frac{(-1)d c_i(v_i)}{(-1)d v_i} = \frac{d c_i(v_i)}{d v_i}.
    \end{equation*}
Therefore, the valuations of the reverse auction satisfy the condition in Definition~\ref{def:reverse_reg} if and only if the valuations of the equivalent forward auction satisfy Definition~\ref{def:reg}.

We must also show that the reverse auction and forward auction mechanisms are equivalent, i.e., $\rho^{M}(\mathcal{V}) = \rho^{f,M}(-\mathcal{V})$ and $X^M(\mathcal{V}) = -X^{f,M}(-\mathcal{V})$, where $-\mathcal{V}$ is the set of valuations in the equivalent forward auction. This follows directly from the negation of the virtual valuations shown in Eq.~(\ref{eq:neg_of_VV}) and the definitions of the mechanisms in Eq.~(\ref{eq:myerson_forward_mechanism}) and Eq.~(\ref{eq:myerson_reverse_mechanism}).
\end{proof}
\vspace{-0.1in}
\textbf{By proving the Myerson regularity condition and mechanism structure for reverse auctions, we have derived conditions under which the Myerson pricing mechanism is easily implementable.} We next show that under lognormal fading and Rayleigh fading, the MWA satisfies the reverse regularity condition in Definition~\ref{def:reverse_reg}.

\vspace{-0.15in}
\subsection{Regularity of WPT Auctions}
We now show the regularity of the MWA under lognormal and Rayleigh fading. For tractable analysis, we assume that the source has partial information about the candidate-to-AP downlink channel power, $H_i$. Specifically, we assume the locations of the candidates and path loss parameters are available to the source, so that the path loss component of the channel, $H_{i,\text{PL}}$, is known. We also assume that the parameters needed to model fading, specified below, are known\footnote{Candidate locations may be found using, e.g., integrated sensing and communication techniques~\cite{Lu2024IoTJ}. The LOS channel path loss and fading parameters may be estimated from the source's interaction with the candidates over the source-to-candidate channel, which is also LOS.}. However, the fading realizations, $H_{i,\text{f}},\,\forall i \in \mathcal{N}$ are unknown, and they are treated as i.i.d random variables with a known PDF and CDF of $f_{H_f^{\text{LOS}}}(h)$ and $F_{H_f^{\text{LOS}}}(h)$, respectively.

To simplify analysis, recall that $v_i = P_i/\tilde{\alpha}_i + P_{s,i},\,\forall i\in\mathcal{N}$. Given our information assumptions, $P_{s,i}$ is constant while $P_i$ is a random variable due to its dependence on the unknown fading power $H_{i,\text{f}}$. Letting $\tilde{v}_i =P_i/\tilde{\alpha}_i = C_i/H_{i,\text{f}}$, with
$C_i = \zeta/(H_{i,\text{PL}}\tilde{\alpha_i})$, we have $f_{V_i}\left( v_i\right) = f_{\tilde{V}_i}\left( v_i-P_{s,i}\right)$. Further, let $\tilde{c}_i(\tilde{v}_i) = \tilde{v}_i + F_{\tilde{V_i}}(\tilde{v}_i)/f_{\tilde{V_i}}(\tilde{v}_i)$, and note that $d\tilde{c}_i(\tilde{v}_i)/d \tilde{v}_i = d c_i(v_i)/d v_i$. Thus, we may equivalently show that the auction with valuations $\tilde{v}_i$ is regular. We now derive the cumulative distribution function  of $\tilde{v}_i$:

\vspace{-0.1in}
\begin{equation}\label{eq:general_vtilde_dist}
\begin{split}
    F_{\tilde{V}_i}(\tilde{v}_i) &= \text{Pr}\left(\tilde{V}_i \leq \tilde{v_i} \right) = \text{Pr}\left(\frac{C_i}{H_{i,\text{f}}} \leq \tilde{v}_i\right) \\
    &=  1 - \text{Pr}\left(H_{i,\text{f}} \leq C_i/\tilde{v}_i\right) =1-F_{H_{f}^\text{LOS}}\left(h_i(\tilde{v}_i)\right),
\end{split}
\vspace{-0.1in}
\end{equation}
where $h_i(\tilde{v}_i) = C_i/\tilde{v}_i$. Noting that $h_i(\tilde{v})$ is monotonic in $\tilde{v}$, the PDF is given by $f_{\tilde{v}_i}(\tilde{v}_i) = f_{H_{f}^{\text{LOS}}}\left(h_i(\tilde{v}_i)\right)C_i/\tilde{v}_i^2$. We are now ready to develop the main results.

\noindent\textbf{Lognormal fading:}
We first consider the case where fading power is modeled as a lognormal random variable. Discussion of lognormal fading generally characterize channel power in the dB domain, i.e., $H_{i,\text{f}}^{\text{dB}} = 10\log_{10}H_{i,\text{f}}$. For the derivations below, it will be more convenient to work with linear power, so that if $H_{i,\text{f}}\sim \text{Lognormal}(0, \sigma_f^2)$, then $H_{i,\text{f}}^{\text{dB}}\sim N(0, (10\sigma_f/\ln(10))^2)$, where $N$ denotes the normal distribution.

\begin{theorem}
    Suppose that for all $i\in\mathcal{N}$, the fading is i.i.d with $H_{i,\text{f}}\sim \text{Lognormal}(0, \sigma_f^2)$. Then the WPT reverse Myerson auction is regular.
\end{theorem}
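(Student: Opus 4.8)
The plan is to verify the regularity condition of Definition~\ref{def:reverse_reg} directly on the rescaled valuation $\tilde v_i$, for which it suffices (as established just above Eq.~\eqref{eq:general_vtilde_dist}) to show that $\tilde c_i(\tilde v_i) = \tilde v_i + F_{\tilde V_i}(\tilde v_i)/f_{\tilde V_i}(\tilde v_i)$ is nondecreasing in $\tilde v_i$. The natural device is to change variables back to the fading realization $h = h_i(\tilde v_i) = C_i/\tilde v_i$, which is a strictly decreasing bijection of $(0,\infty)$ onto itself. Using $F_{\tilde V_i}(\tilde v_i) = 1 - F_{H_f^{\text{LOS}}}(h)$ and $f_{\tilde V_i}(\tilde v_i) = f_{H_f^{\text{LOS}}}(h)\,C_i/\tilde v_i^2$ from Eq.~\eqref{eq:general_vtilde_dist}, I would substitute $\tilde v_i = C_i/h$ to obtain
\[
    \tilde c_i = C_i\,\psi(h), \qquad \psi(h) = \frac{1}{h} + \frac{1}{h^2}\,\frac{1 - F_{H_f^{\text{LOS}}}(h)}{f_{H_f^{\text{LOS}}}(h)}.
\]
Since $C_i > 0$ and $\tilde v_i$ is strictly decreasing in $h$, the chain rule shows that proving $d\tilde c_i/d\tilde v_i \ge 0$ is equivalent to proving $\psi'(h) \le 0$ for all $h > 0$.

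Next I would carry out the differentiation. Writing $R(h) = (1 - F_{H_f^{\text{LOS}}}(h))/f_{H_f^{\text{LOS}}}(h)$, a short computation gives $h^3\psi'(h) = h\bigl(R'(h) - 1\bigr) - 2R(h)$, so the claim reduces to the inequality $h\bigl(R'(h)-1\bigr) \le 2R(h)$. This is where the lognormal structure enters: with $t = (\ln h)/\sigma_f$, the density and distribution are $f_{H_f^{\text{LOS}}}(h) = \phi(t)/(h\sigma_f)$ and $F_{H_f^{\text{LOS}}}(h) = \Phi(t)$, so $R(h) = h\sigma_f\,M(t)$, where $M(t) = (1-\Phi(t))/\phi(t)$ is the standard Mills ratio. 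Using its defining identity $M'(t) = t\,M(t) - 1$, one finds $R'(h) = (\sigma_f + t)\,M(t) - 1$; substituting these and factoring out the common $h > 0$ collapses the reduced inequality to the remarkably simple pointwise condition
\[
    (t - \sigma_f)\,M(t) \le 2 \qquad \text{for all } t \in \mathbb{R}.
\]

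Finally I would establish this inequality by splitting on the sign of $t - \sigma_f$. When $t \le \sigma_f$, the left-hand side is nonpositive (as $M(t) > 0$), so the bound holds trivially. When $t > \sigma_f > 0$, we have $t > 0$, and the classical Mills-ratio bound $M(t) < 1/t$ yields $(t-\sigma_f)M(t) < (t-\sigma_f)/t = 1 - \sigma_f/t < 1 \le 2$, completing the argument. The only genuine analytic input is this upper bound on the Mills ratio, together with its ODE $M' = tM - 1$; everything else is bookkeeping. I therefore expect the main obstacle to be purely algebraic—correctly propagating the change of variables and the $M' = tM-1$ substitution through to the clean form $(t-\sigma_f)M(t)\le 2$—rather than conceptual, since once the problem is reduced to that inequality the standard Mills-ratio estimate finishes it with considerable room to spare.
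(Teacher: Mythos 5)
Your proof is correct and takes essentially the same route as the paper's: both verify the derivative condition for $\tilde c_i$ directly and reduce it to exactly the same scalar inequality $(t-\sigma_f)M(t)\le 2$ for the Gaussian Mills ratio $M(t)=(1-\Phi(t))/\phi(t)$, closed by the same classical bound $M(t)<1/t$ for $t>0$. The only difference is organizational---you change variables to the fading realization $h$ and invoke the Mills-ratio ODE $M'(t)=tM(t)-1$, whereas the paper differentiates in the valuation variable $\tilde v_i$ directly (your explicit two-case split is, if anything, slightly more careful than the paper's terse final step).
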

\begin{proof}
As a general statement, we have
\begin{equation}\label{eq:general_dv}
    \begin{split}
        \frac{d\tilde{c}_i(\tilde{v}_i)}{d\tilde{v}_i} &= 1 +\frac{f_{\tilde{V}_i}(\tilde{v_i})}{f_{\tilde{V}_i}(\tilde{v_i})} - \frac{F_{\tilde{V}_i}(\tilde{v_i})}{f_{\tilde{V}_i}(\tilde{v_i})^2}\frac{df_{\tilde{V}_i}(\tilde{v_i})}{d\tilde{v}_i}\\
    &= 2- \frac{F_{\tilde{V}_i}(\tilde{v_i})}{f_{\tilde{V}_i}(\tilde{v_i})^2}\frac{df_{\tilde{V}_i}(\tilde{v_i})}{d\tilde{v}_i}.
    \end{split}
\end{equation}
For lognormal shadowing, the CDF of the fading is $F_{H_{f}^\text{LOS}}(h) = \Phi(\ln{(h)}/\sigma_{f})$, with $\Phi(\cdot)$ the CDF of the standard normal distribution, while the PDF is $f_{H_{f}^\text{LOS}}(h) = \frac{1}{h\sigma_{f}\sqrt{2\pi}}\text{exp}(-\ln{(h)}/2\sigma_{f}^2)$. Based on this, we have
\begin{equation*}
    F_{\tilde{V}_i}(\tilde{v}_i) = 1 -\Phi\left(\frac{\ln{(h_i(\tilde{v}_i))}}{\sigma_{\text{f}}}\right)
\end{equation*}
and
\begin{equation*}
    f_{\tilde{V}_i}(\tilde{v}_i) = \frac{d F_{\tilde{V}_i}(\tilde{v}_i)}{d \tilde{v}_i} = \frac{h_i(\tilde{v}_i)}{C_i \sigma_{f}\sqrt{2\pi}}\text{exp}\left(\frac{-\ln{(h_i(\tilde{v}_i))}}{2\sigma_{f}^2}\right).
\end{equation*}  
Differentiating, we get
\begin{equation*}
    \begin{split}
        \frac{d f_{\tilde{V}_i}(\tilde{v}_i)}{d\tilde{v}_i} &= \frac{d f_{\tilde{V}_i}(\tilde{v}_i)}{d h_i(\tilde{v}_i)} \frac{{d h_i(\tilde{v}_i)}}{d\tilde{v}_i}\\
        &= \frac{\text{exp}(-\ln{h_i(\tilde{v}_i)}/(2\sigma_{f}^2))}{C_1 \sigma_{f}\sqrt{2\pi}}\left(1 - \frac{\ln{h_i(\tilde{v}_i)}}{\sigma_{f}^2}\right)\frac{-C_i}{\tilde{v}_i^2}\\
        & = \frac{-\text{exp}(-\ln{h_i(\tilde{v}_i)}/(2\sigma_{f}^2))}{\tilde{v}_i^2 \sigma_{f}\sqrt{2\pi}}\left(1 - \frac{\ln{h_i(\tilde{v}_i)}}{\sigma_{f}^2}\right)\\
        &= \frac{-f_{\tilde{V}_i}(\tilde{v}_i)}{\tilde{v}_i}\left(1 - \frac{\ln{h_i(\tilde{v}_i)}}{\sigma_{f}^2}\right).
    \end{split}
\end{equation*}

We can now expand Eq.~(\ref{eq:general_dv}):
\begin{equation*}
    \begin{split}
         \frac{d\tilde{c}_i(\tilde{v}_i)}{d\tilde{v}_i} &= 2- \frac{F_{\tilde{V}_i}(\tilde{v_i})}{f_{\tilde{V}_i}(\tilde{v_i})^2}\frac{-f_{\tilde{V}_i}(\tilde{v}_i)}{\tilde{v}_i}\,\left(1 - \frac{\ln{h_i(\tilde{v}_i)}}{\sigma_{f}^2}\right)\\
         &= 2 + \frac{F_{\tilde{V}_i}(\tilde{v_i})}{f_{\tilde{V}_i}(\tilde{v_i})\tilde{v}_i}\,\left(1 - \frac{\ln{h_i(\tilde{v}_i)}}{\sigma_{f}^2}\right)\\
         &= 2 + \frac{F_{\tilde{V}_i}(\tilde{v_i})}{f_{H_{\text{f}}}(h_i(\tilde{v_i})) h_i(\tilde{v}_i)}\,\left(1 - \frac{\ln{h_i(\tilde{v}_i)}}{\sigma_{f}^2}\right)\\
         &= 2 + \frac{1 - \Phi\left(\frac{\ln{h_i(\tilde{v}_i)}}{\sigma_{f}}\right)}{\phi\left(\frac{\ln{h_i(\tilde{v}_i)}}{\sigma_{f}}\right) \sigma_{f}^{-1}}\,\left(1 - \frac{\ln{h_i(\tilde{v}_i)}}{\sigma_{f}^2}\right).\\
    \end{split}
\end{equation*}
We then let $\gamma = \ln{(h_i(\tilde{v}_i}))/\sigma_{f}$ to get
\begin{equation*}
     \frac{d\tilde{c}_i(\tilde{v}_i)}{d\tilde{v}_i} = 2 + \frac{1 - \Phi\left(\gamma\right)}{\phi\left(\gamma\right)}\,\left(\sigma_{f} -\gamma\right).
\end{equation*}
where $\phi$ is the PDF of the standard normal distribution. From properties of Mills' ratio, we have $(1-\Phi(\gamma))/\phi(\gamma) < 1/\gamma,\;\forall \gamma >0$ \cite{Gordon1941AMS}, consequently, for all $\tilde{v}_i$, we have $d\tilde{c}_i(\tilde{v}_i)/d \tilde{v}_i > 0$.
\end{proof}

\textbf{The above result shows that under lognormal fading, the mechanism described in (\ref{eq:myerson_reverse_mechanism}) minimizes the source's expected energy consumption.} Furthermore, as a consequence of the monotonicity of $c_i(v_i)$ implied by regularity, the optimal transmit power $X^M(\mathcal{V})$ can be found  with logarithmic time complexity using a simple bisection method.

\noindent\textbf{Rayleigh fading:}
Rayleigh fading is another common fading model found in the literature. Importantly, Rayleigh fading refers to the distribution of the channel gain, the square root of the power, and if the channel gain is Rayleigh distributed with distribution parameter $\psi_{f}$, the channel power is exponentially distributed with average power $2\psi_f^2$ \cite{GoldsmithWC}. We now prove the regularity of the auction assuming this model.
\begin{theorem}
    Suppose that for all $i\in\mathcal{N}$, the fading is i.i.d with $H_{i,\text{f}}\sim \text{Exp}(1/(2\psi_{\text{f}}^2)$. Then the WPT reverse Myerson auction is regular.
\end{theorem}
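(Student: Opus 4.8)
The plan is to exploit the fact that the exponential fading model yields closed-form, elementary expressions for every quantity appearing in the general derivative \eqref{eq:general_dv}, so that---unlike the lognormal case---no special functions or Mills'-ratio bounds are needed. First I would record the exponential law: with $H_{i,\text{f}}\sim\mathrm{Exp}(1/(2\psi_{\text{f}}^2))$ we have $F_{H_{f}^{\mathrm{LOS}}}(h)=1-e^{-h/(2\psi_{\text{f}}^2)}$ and $f_{H_{f}^{\mathrm{LOS}}}(h)=\tfrac{1}{2\psi_{\text{f}}^2}e^{-h/(2\psi_{\text{f}}^2)}$ for $h\ge 0$.

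Next, substituting $h_i(\tilde{v}_i)=C_i/\tilde{v}_i$ into \eqref{eq:general_vtilde_dist} collapses the complementary CDF of the fading into a single exponential, giving the inverse-exponential (Fr\'echet-type) form $F_{\tilde{V}_i}(\tilde{v}_i)=e^{-a/\tilde{v}_i}$, where I abbreviate $a := C_i/(2\psi_{\text{f}}^2)$. Using $f_{\tilde{V}_i}(\tilde{v}_i)=f_{H_{f}^{\mathrm{LOS}}}(h_i(\tilde{v}_i))\,C_i/\tilde{v}_i^2$ (equivalently, differentiating $F_{\tilde{V}_i}$) yields $f_{\tilde{V}_i}(\tilde{v}_i)=a\,\tilde{v}_i^{-2}e^{-a/\tilde{v}_i}$, and a further differentiation gives $df_{\tilde{V}_i}/d\tilde{v}_i = a\,\tilde{v}_i^{-4}e^{-a/\tilde{v}_i}(a-2\tilde{v}_i)$.

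The final step is to insert these quantities into \eqref{eq:general_dv}. The key observation is that $F_{\tilde{V}_i}/f_{\tilde{V}_i}^2$ equals $\tilde{v}_i^4 e^{a/\tilde{v}_i}/a^2$, so that when multiplied by $df_{\tilde{V}_i}/d\tilde{v}_i$ every exponential and every power of $\tilde{v}_i$ cancels, leaving the elementary expression $(a-2\tilde{v}_i)/a = 1 - 2\tilde{v}_i/a$. Hence $d\tilde{c}_i(\tilde{v}_i)/d\tilde{v}_i = 2 - (1 - 2\tilde{v}_i/a) = 1 + 2\tilde{v}_i/a$, which is strictly positive for all $\tilde{v}_i>0$ (the support of $\tilde{V}_i$). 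Since $d\tilde{c}_i/d\tilde{v}_i = dc_i/dv_i$, regularity in the sense of Definition~\ref{def:reverse_reg} follows.

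Unlike the lognormal proof, there is no genuine obstacle here: the argument is a short algebraic computation, and the only thing requiring care is bookkeeping---correctly carrying the Jacobian factor $C_i/\tilde{v}_i^2$ from the change of variables $h_i=C_i/\tilde{v}_i$ and verifying that the cancellation in $(F_{\tilde{V}_i}/f_{\tilde{V}_i}^2)\,(df_{\tilde{V}_i}/d\tilde{v}_i)$ is exact. It is worth noting that the result is even stronger than regularity requires, since the derivative is bounded below by $1$; this reflects the benign tail behavior of the exponential fading model.
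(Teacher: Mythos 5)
Your proof is correct and takes essentially the same route as the paper: both substitute the exponential fading law into the virtual-valuation framework and observe that the exponentials cancel, leaving an elementary increasing function. The paper simply simplifies the ratio $F_{\tilde{V}_i}/f_{\tilde{V}_i}$ first, obtaining the closed-form quadratic $\tilde{c}_i(\tilde{v}_i) = \tilde{v}_i + 2\tilde{v}_i^2\psi_{\text{f}}^2/C_i$ and noting its monotonicity, whereas you differentiate through Eq.~(\ref{eq:general_dv}); the two computations are interchangeable and yield the identical derivative $1 + 4\tilde{v}_i\psi_{\text{f}}^2/C_i \geq 1$.
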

\begin{proof}
    From Eq.~(\ref{eq:general_vtilde_dist}) and the properties of the exponential distribution, we have 
\begin{equation*}
    \begin{split}
    F_{\tilde{V}_i}(\tilde{v}_i) &= 1 - (1-e^{-h_i(\tilde{v}_i)/(2\psi_{\text{f}}^2)}) = \text{exp}\left(\frac{-C_i}{2\tilde{v}_i\psi_{\text{f}}^2}\right)\\
    f_{\tilde{V}}(\tilde{v}_i) &= \frac{C_i}{2\tilde{v}_i^2\psi_{\text{f}}^2}\text{exp}\left(\frac{-C_i}{2\tilde{v}_i\psi_{\text{f}}^2}\right).
    \end{split}
\end{equation*}
 We can now write $\tilde{c}_i(\tilde{v}_i) = \tilde{v}_i + 2\tilde{v}_i^2 \psi_{\text{f}}^2/C_i$,
which is monotonically increasing in $\tilde{v}_i$ for all $\tilde{v}_i \geq 0$.
\end{proof}
\textbf{The result presented above indicates that, with Rayleigh fading, the mechanism given in (\ref{eq:myerson_reverse_mechanism}) minimizes the source's average energy usage.} We note that the expression $\tilde{c}_i(\tilde{v}_i)$ is quadratic, so that finding the payment, $X^M(\mathcal{V})$, using the Myerson mechanism can be done in constant time.

\noindent \textbf{Remark - Rician Fading:} In addition to lognormal and Rayleigh fading, Rician fading is another important fading model found in the literature. Although Rayleigh fading can be seen as a Rician fading with a K-factor of 0 \cite{GoldsmithWC},  \textbf{it is easy to produce counterexamples that illustrate that a general Rician distribution is not regular}, and thus more complex ``ironing" techniques would be needed for auctions in these scenarios \cite{Myerson1981MOR}. This makes Myerson auctions under Rician fading more computationally challenging compared to lognormal and Rayleigh fading, and we discuss this point further in Section~\ref{sec:conclusion}.

\begin{figure*}
    \centering
    \includegraphics[width=7.14in, trim={0.1in, 0.05in, 0.1in, 0.45in}, clip]{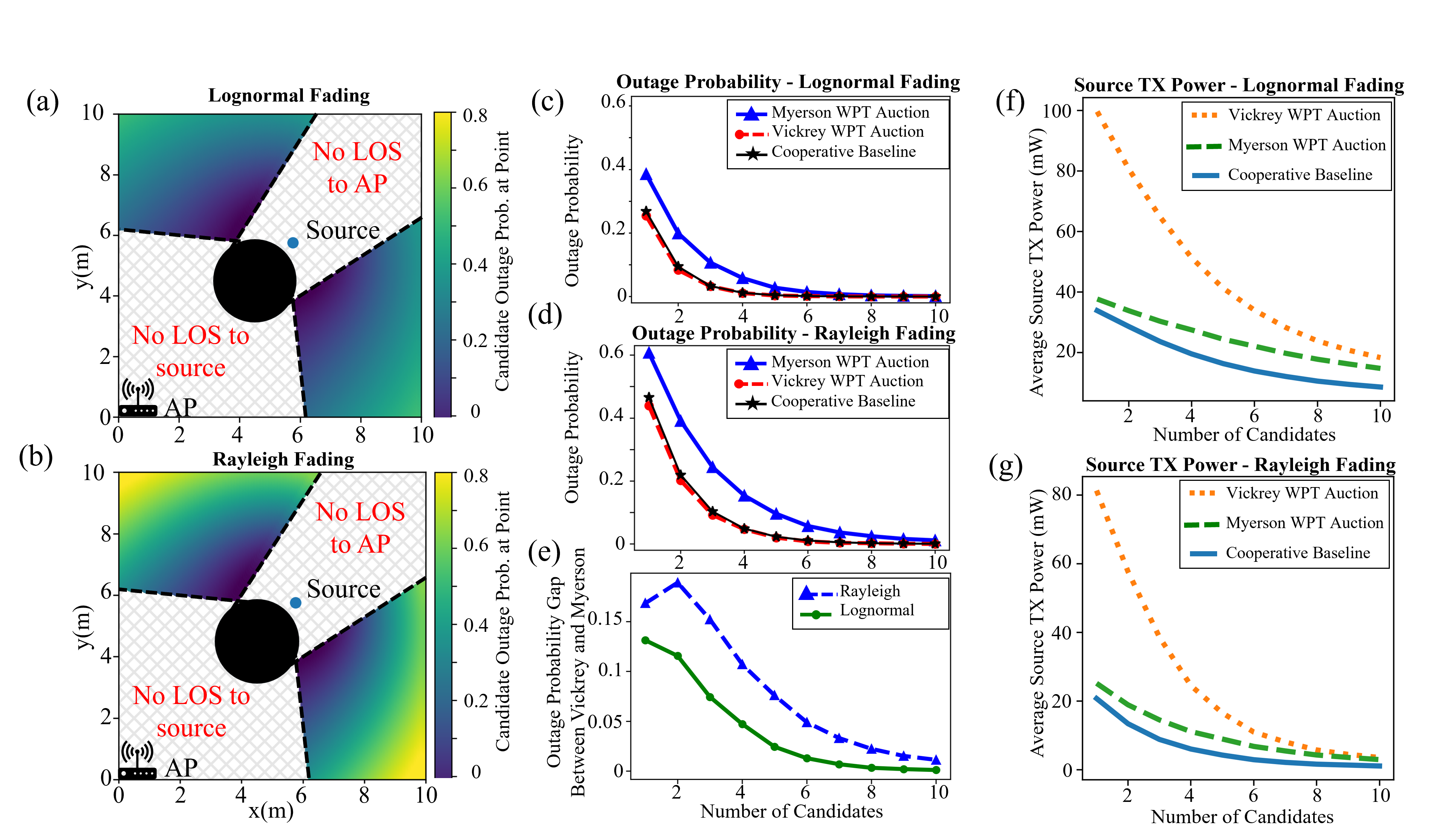}
    \vspace{-0.2in}
    \caption[short]{Numerical results for a sample blockage scenario. (a) and (b): The probability that a candidate at each point could act as a relay in the cooperative case for (a) lognormal and (b) Rayleigh fading, i.e., $p_{\text{out},c}$ from Eq.~(\ref{eq:candidate_op}) conditioned on candidate placement. Areas which do not have a LOS path to both the AP and the source are crosshatched. (c) and (d): The outage probability of the MWA, VWA, and cooperative baseline for (c) lognormal and (d) Rayleigh fading. The cooperative baseline outage probability is calculated using the expression for the minimal outage probability in Eq.~(\ref{eq:vick_p_out}), while the MWA and VWA outage probabilities are determined empirically from the numerical experiments.  The outage probability under the MWA is greater than it is for the VWA, which matches the minimum outage probability. For both the VWA and MWA, the outage probability decays exponentially. (e): The outage probability gap between the Vickrey and Myerson auction. This gap decays exponentially in the number of candidates. (f) and (g): The average source transmit power under the VWA and MWA, as well as for the cooperative baseline introduced in Section~\ref{sec:auction}, for (f) lognormal and (g) Rayleigh fading. Energy usage for the MWA is less than it is under the VWA, and energy efficiency converges as the number of candidates increases. See color PDF for optimal viewing.}
\label{fig:fixed_exp_setup}
\vspace{-0.2in}
\end{figure*}

\vspace{-0.1in}
\subsection{Outage Probability}
\textbf{While the MWA promises greater energy efficiency on average compared to the VWA, it does not minimize the outage probability.} To illustrate, consider a simple scenario consisting of a single candidate and the source, with $P_{s} > P_{\text{max}}$ and $v_1 < P_{\text{max}}  = v_0 < c_1(v_1)$. The VWA results in the candidate winning and the source using total power $P_{\text{max}}$. However, for the MWA, the source wins but cannot communicate. Thus, while the MWA requires less energy on average, it does so at the cost of increased outage probability. We next characterize the extent to which the MWA results in additional outage probability.

For the MWA, we decompose the outage probability into two components. First, there is $p_{\text{out}}^*$, which is the minimum outage probability, corresponding the event that $P_s > P_{\text{max}}$ and $ v_i > P_{\text{max}}$ for all $i>0$. For the MWA, we have an additional outage probability, $p_{\text{out}}^{\text{M,g}}$, corresponding to realizations of the valuations which satisfy three conditions: that $P_s > P_{\text{max}}$, that $\rho(\mathcal{V}) = 0$, and that there exists at least one $i$ such that $v_i \leq P_\text{max}$. Thus, the outage probability under the MWA, $p_\text{out}^\text{m} = p_\text{out}^* + p_{\text{out}}^{\text{M,g}}$. We have already mathematically characterized $p_\text{out}^*$ in Eq. (\ref{eq:vick_p_out}), and we now characterize the gap, $p_{\text{out}}^{\text{M,g}}$:
\begin{equation*}
\begin{split}
    p_{\text{out}}^{\text{M,g}}\hspace{-0.02in} =& \text{Pr}\left(\left(P_s > P_{\text{max}}\right)  \bigcap \left(\cap_{i=1}^n c_i(v_i) > v_0  \right) \bigcap \left( \cup_{i=1}^n v_i \leq v_0\right) \right)\\
    =&\text{Pr}\left(P_s > P_{\text{max}} \right) \\
    &\times \text{Pr}\left( \left( \cap_{i=1}^n c_i(v_i) > v_0 \right) \bigcap \left(\cup_{i=1}^n v_i \leq v_0 \right)|P_s > P_{\text{max}}\right)\\
    & = p_{\text{out},s} \prod_{i=1}^n \left( F_{V_i}(P_{\text{max}}) - F_{V_i}(\bar{\nu}_{i}) \right),
\end{split}
\end{equation*}
with $\bar{\nu}_{i} = c_i^{-1}(P_{\text{max}})$, where, assuming regularity, $c_i^{-1}$ is the well-defined inverse of the reverse auction virtual valuation function, $c_i(v_i)$. 

Three circumstances reduce or eliminate the increased risk of outage. First, as long as a candidate wins the MWA, communication will occur, since the payment $X^M(\mathcal{V})$ will satisfy $X^M(\mathcal{V}) < c_{\rho^M}(X^M(\mathcal{V})) \leq v_0\leq P_{\text{max}}$. Secondly, if $P_{s} \leq P_{\text{max}}$, then communication will always occur when possible, since even if the source wins, it is able to successfully transmit the data in time. Finally, as the number of candidates increases, the additional outage probability decreases. More specifically, if we again assume that candidates are independently and identically distributed (so that valuations, $v_i\,\,\forall i \in \mathcal{N}$ are i.i.d.), we have 
\begin{equation*}
    p_{\text{out}}^{\text{M,g}} = P_{\text{out},s} \left(F_{V_c}(P_{\text{max}}) - F_{V_c}(\bar{\nu}_{c})\right)^n,
\end{equation*}
where the candidate-identifying subscript $i$ has been replaced by $c$, denoting a generic candidate. This expression shows that the gap, $p_{\text{out}}^{\text{M,g}}$, between the minimum outage probability and the outage probability of the MWA decays exponentially in $n$. \textbf{Thus, we have shown that there is gap in outage probability between the cooperative baseline and the MWA, and that this gap decreases exponentially as the number of candidates increases.}

\vspace{-0.1in}
\subsection{Energy Consumption}
While the Myerson mechanism may result in a larger outage probability, it also reduces the average energy consumption. To build intuition as to why, consider a simple scenario with a single candidate, and assume that $v_1  < c_1(v_1) < v_0$. In the case of the VWA, the total source transmit power is $P_\text{tot} = v_0$, while for the MWA, we have $P_\text{tot} = c_1^{-1}(v_0) < v_0$. As numerical results in Section~\ref{sec:results} show, this difference can be significant. 

\textbf{In summary, the MWA increases energy efficiency but does so at the cost of a greater outage probability.} We next present several numerical results which illustrate the analysis of the preceding sections.

\section{Numerical Results}\label{sec:results}
We now present numerical studies illustrating the analysis presented in the previous sections. We compare the outage probability, the source energy usage, and the energy harvested using both the VWA and MWA under different fading assumptions. These showcase when one type of auction will be preferred over the other, as well as how the number of candidates, the fading characteristics, and the blockage geometry impact system performance.

Throughout the section, we consider the setup depicted in Fig.~\ref{fig:fixed_exp_setup}, with $q_s = (5.76\,\text{m}, 5.76\,\text{m})$, the AP at the origin, and a large circular blockage occluding the LOS path. The locations of the candidates, $q_i$, are i.i.d. uniformly across the regions where a LOS path exists to both the AP and source, i.e., the non-gray areas in Fig.~\ref{fig:fixed_exp_setup} (a) and (b). We use the following system parameters, drawn from real-world studies: For LOS channels, we set the path loss intercept and exponent to $K^{\text{LOS}} = 0$ and $\eta^{\text{LOS}} = 2.5$, respectively, while for NLOS channels we set $K^{\text{NLOS}} = -25$ and $\eta^{\text{NLOS}} = 5.76$. Both lognormal and Rayleigh fading is considered, and the lognormal fading is parameterized with $\sigma_{\text{f}} = 8.66$ for LOS channels and $\sigma_{\text{f}} = 9.02$ for NLOS channels \cite[Table VI]{Hemadeh2018CST},~\cite{Zhao2013ICC}. We set $\psi_f = 1/\sqrt{2}$ for all Rayleigh fading experiments to ensure that the same total power is used for the lognormal and Rayleigh cases. The receiver noise power is taken as $\sigma^2 = -75\,$dBm, and $P_{\text{max}} = 100\,$mW. We consider a fixed time of $T = 1\,$s and a bandwidth-normalized payload of $D = 8$ bits/Hz, so that the data must be sent at a rate of at least $r = 8\,$ bits/s/Hz. If the maximum achievable rate as defined in Eq.~(\ref{eq:max_rate}) is less than this on any of the channels used for communication, we declare an outage. For WPT, we take the circuity efficiency to be $\alpha=0.2$~\cite{kwiatkowski20202IMS} and the effective aperture as $A_r=0.01\,\text{m}^2$.

Throughout, we show system performance with different numbers of candidates, under different fading models, and in different environments. For each combination of number of candidates, fading model, and environment geometry, we sample 10,000 different possible placements of the $n$ candidates, and for each candidate we sample fading variables from the corresponding distribution. For each sampled problem instance, we calculate the cooperative baseline and the results of both the VWA and MWA. We report the average of these batches of 10,000 problem instances.

\begin{figure}
    \centering
    \includegraphics[width=0.8\linewidth, trim = {0.05in, 0.1in, 0.1in, 0.325in}, clip]{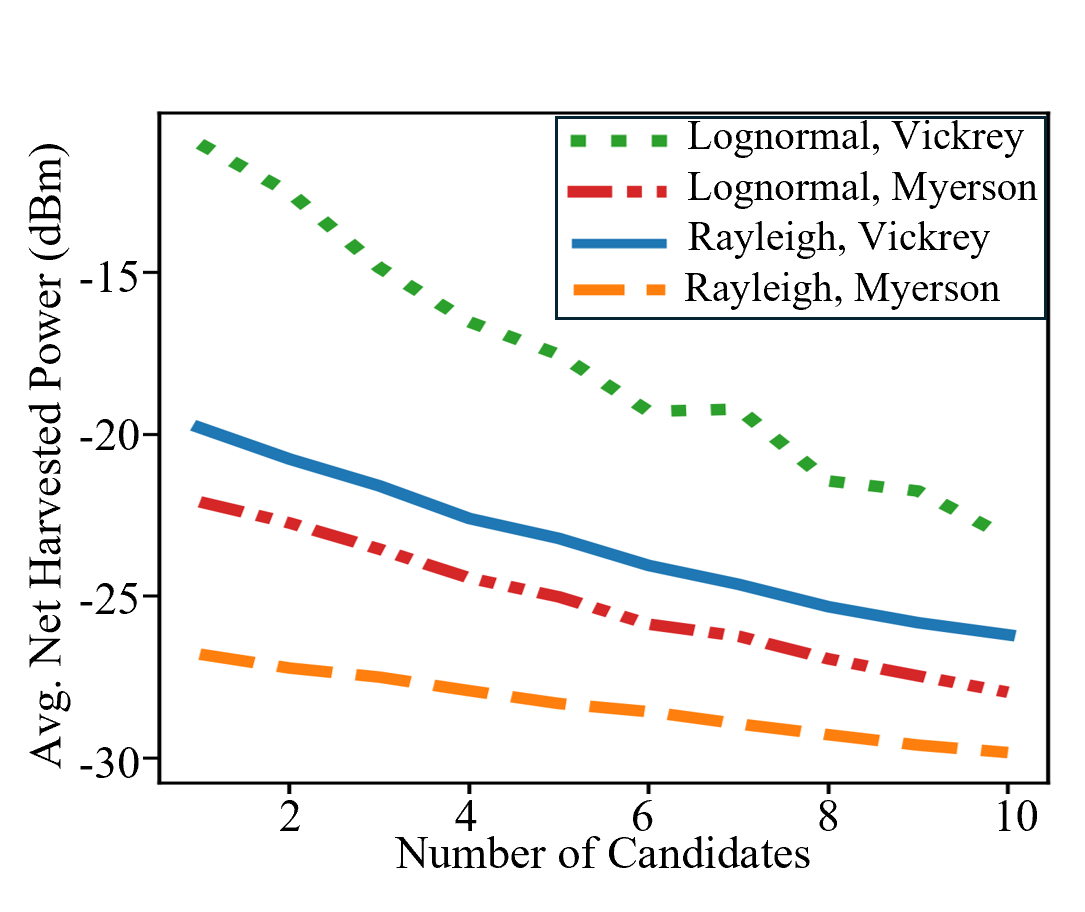}
    \caption{Average net energy harvested for the scenario depicted in Fig.~\ref{fig:fixed_exp_setup}.}
    \label{fig:fixed_exp_harvest}
    \vspace{-0.2in}
\end{figure}

\begin{figure*}
    \centering
    \includegraphics[width = 7.14in, trim={0.1in, 0.05in, 0.1in, 0.05in}, clip]{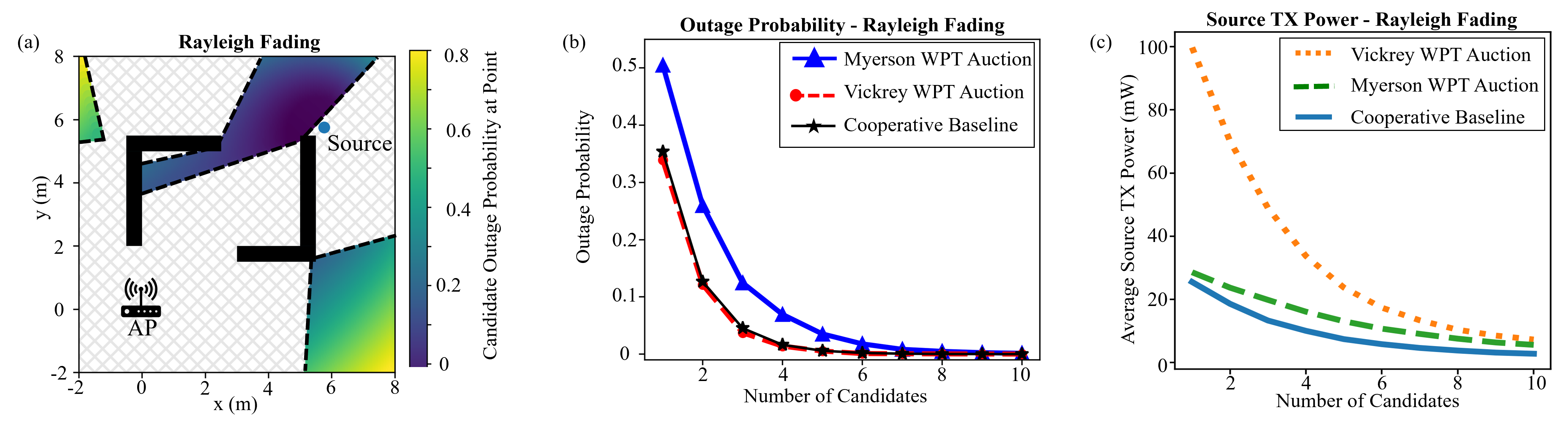}
    \vspace{-0.25in}
    \caption{Numerical results for the blockage scenario shown in (a) under Rayleigh fading. (a): The probability that a candidate at each point could act as a relay in the cooperative case,i.e., $p_{\text{out},c}$ from Eq.~(\ref{eq:candidate_op}) conditioned on candidate placement. Areas which do not have a LOS path to both the AP and the source are crosshatched. (b): The empirical outage probability of the MWA and VWA, and the analytical outage probability of cooperative baseline, calculated using Eq.~(\ref{eq:vick_p_out}). (c): The average source transmit power under the VWA and MWA, as well as for the cooperative baseline introduced in Section~\ref{sec:auction}. See color PDF for optimal viewing.}
    \label{fig:config_3_results}
\end{figure*}

\vspace{-0.1in}
\subsection{Outage Probability} We first study the outage probability across auction types, looking at the impact of the number of candidates and the fading characteristics. Averaging over the possible positions of the candidates as well as the possible realizations of the channel fading power, we find that the outage probability for an arbitrary agent, mathematically characterized in Eq.~(\ref{eq:candidate_op}), is $p_{\text{out,c}} = 0.35$ for lognormal fading and $p_{\text{out,c}} = 0.47$ for Rayleigh fading (Fig.~\ref{fig:fixed_exp_setup} (b)). Thus, we will see lower outage probabilities under lognormal fading than we see under Rayleigh fading, though we note that for different choices of variance $\sigma_f^2$, this relationship may not hold. These candidate outage probabilities are then used to calculate the minimum outage probability $p_\text{out}^*$ using Eq.~(\ref{eq:vick_p_out}), which is compared to the outage probabilities of the VWA and MWA computed empirically from the simulation experiments. As stated in Lemma~\ref{lma:vick_min_op}, we see that the VWA achieves the minimum outage probability, i.e., $p_\text{out}^* = p_\text{out}^V$, while there is a gap between the minimum outage probability and the outage probability of the MWA. The results also demonstrate the exponential decay predicted in our discussion in Sections~\ref{sec:vickrey}, and Fig.~\ref{fig:fixed_exp_setup} (e) shows that the gap in outage probability between the MWA and minimum outage probability decays exponentially, as discussed in Section~\ref{sec:myerson}. Crucially, both the VWA and MWA make the outage probability very small for large values of $n$.

\vspace{-0.1in}
\subsection{Source Energy Usage}
Figs.~\ref{fig:fixed_exp_setup} (f) and (g) compare the source's average energy consumption under both the VWA and MWA, with the cooperative scenario as a baseline. For fair comparison, we take averages for each of the methods only over cases with successful communication. We see that, as expected, the MWA results in significantly less energy usage than the VWA, and that for large $n$, the energy consumption of both auctions approaches that of the baseline. The decay in energy consumption is particularly pronounced for the VMA. Thus, while the VWA achieves a lower outage probability than the MWA, the MWA is more energy efficient.

\vspace{-0.1in}
\subsection{Energy Harvested}
The average energy harvested by the candidates is plotted in Fig.~\ref{fig:fixed_exp_harvest}. Increasing the number of candidates makes the auction more competitive, and consequently we see the energy harvested decreases. Similarly, the MWA reduces the net energy harvested by the candidates when compared to the VMA, as the MWA allocates more of the net energy savings of the system to the source. As discussed in earlier sections, the net energy harvested by the candidates is several orders of magnitude smaller than the energy used by the source, so that source energy consumption dominates considerations of energy efficiency in the system.

\vspace{-0.1in}
\subsection{Impact of Blockage Geometry}

To better understand the impact of the environment, we consider an additional blockage configuration, shown in Fig.~\ref{fig:config_3_results} (a), under Rayleigh fading. Due to the geometry of the blockages in this environment, the outage probability for an arbitrary candidate is $p_{\text{out,c}} = 0.36$, lower than it was in the blockage environment considered in Fig.~\ref{fig:fixed_exp_setup} under Rayleigh fading, and comparing Fig.~\ref{fig:config_3_results} (b) to Fig.~\ref{fig:fixed_exp_setup} (d), we see that the outage probabilities drop more quickly in this environment than they do in the environment considered previously. Similarly, the average source TX power drops more quickly in this alternate blockage configuration (compare Fig.~\ref{fig:config_3_results} (b) to Fig.~\ref{fig:fixed_exp_setup} (g)). Thus, while the overall trends of rapid decay in outage probability and energy consumption are consistent across environments, certain blockage environments lead to better performance than others.

\begin{figure}
    \centering
    \includegraphics[width=0.8\linewidth, trim={0.1in, 0.1in, 0.1in, 0.1in}]{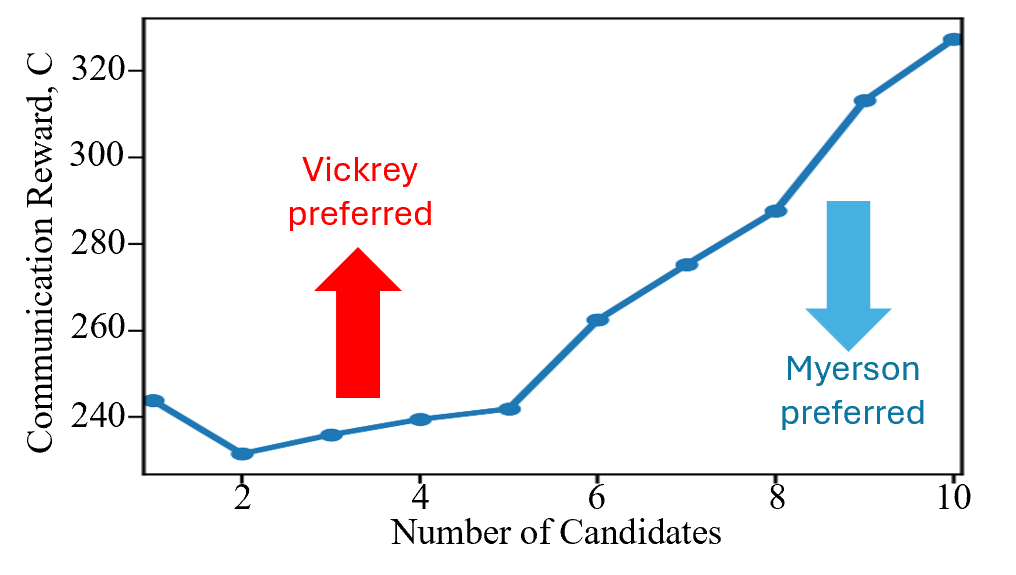}
    \vspace{-0.05in}
    \caption{Proposed design space: The blue line indicates the values of the communication reward, $C$, defined in Eq.~(\ref{eq:utilities}), such that the source's expected utility is equal under the VWA and MWA. Conceptually, $C$ is a penalty for failing to communicate. For values of $C$ below the line, the MWA is preferred. For values of $C$ above the line, the VWA is preferred.}
    \label{fig:fixed_exp_critical_C}
    \vspace{-0.15in}
\end{figure}
\vspace{-0.1in}

\subsection{Comparing Auction Types}
The previous results illustrate the inherent tradeoff between the VWA and MWA. For small $n$, the MWA provides significant energy savings compared to the VWA, but at the cost of a higher Outage probability. As $n$ grows, the energy savings offered by the MWA become less dramatic. However, the outage probability gap, $p_{\text{out}}^{M,g}$, also shrinks.

To better understand when one auction type should be used instead of the other, we compare the average utility, $U_s$, as defined in Eq.~(\ref{eq:utilities}) for both the MWA and VWA under Rayleigh fading in the environment shown in Fig.~\ref{fig:fixed_exp_setup}. Specifically, for each value of $n$, we find the value of $C$ (the reward for successful communication in Eq.~(\ref{eq:utilities})) such that the source is indifferent between the two auctions and plot them in Fig.~\ref{fig:fixed_exp_critical_C}. As $C$ grows larger, greater importance is placed on ensuring communication occurs, i.e., reducing outage probability, and energy efficiency becomes less critical. We therefore see that when ensuring communication is more important (larger values of $C$), the VWA is preferred, as it achieves the minimum outage probability. On the other hand, the MWA is preferred when greater importance is placed on energy efficiency (smaller values of $C$). Qualitatively, we see that as $n$ increases, the MWA becomes preferred for a greater range of values of $C$. This suggests that as the number of candidates increases, the MWA becomes more desirable.

\vspace{-0.1in}
\section{Conclusions and Future Work}\label{sec:conclusion}
This paper presents a novel approach to incentivizing non-cooperative, battery-powered UEs to act as relays through the use of wireless power transfer (WPT) and auction-based protocols. By analyzing the performance of Vickrey and Myerson auctions under realistic channel conditions, we have demonstrated that these mechanisms offer promising solutions for optimizing both energy efficiency and outage probability in non-cooperative relay networks with limited information. Through mathematical analysis and proofs, we have shown that the Vickrey WPT auction (VWA) achieves the same minimum outage probability achieved in the non-cooperative case, while the Myerson WPT auction (MWA) is an energy-efficient alternative at the cost of a possibly greater outage probability. Extensive numerical results corroborate our analytical findings and highlight the critical role of system parameters, such as the number of candidate UEs, the fading environments, and the geometry of the blockages. Overall, the mathematical analysis and numerical results provide guidance on when one auction type is preferred over the other.

Several directions remain open for future work. While the VWA is applicable for any fading model, this paper focused on lognormal and Rayleigh fading when discussing the MWA, as these models make the Myerson mechanism tractable. Extending the MWA to handle non-regular Rician fading models through the application of ``ironing" procedures~\cite{Myerson1981MOR} would expand the applicability of our auction-based approach. Another crucial area of research involves exploring the implications of non-winning candidates still harvesting some power from the source's transmission. Additionally, relaxing the constraint on transmission time would allow for additional optimization over $T$, which adds an interesting element to the analysis of the auctions and makes the framework suitable for instances without hard time constraints. Finally, considering these auctions with a mobile source in the context of communication-aware robotics~\cite{hurst2024unmannedvehicles6gnetworks} would add another interesting layer of complexity.

\bibliographystyle{IEEEtran}
{\footnotesize
\bibliography{main}}

\end{document}